\newcommand{\cardinality}[1]{\ensuremath{\lvert#1\rvert}}
\newcommand{\vol}[1]{\ensuremath{\textrm{Vol}\left( #1 \right)}}
\newcommand{\mspcb}{\ensuremath{\:}}
\newcommand{\mspcc}{\ensuremath{\;}}
\newcommand{\mspcd}{\ensuremath{\quad}}
\newcommand{\mset}[1]{\ensuremath{\mathcal{#1}}}
\newcommand{\mvec}[1]{\ensuremath{\textbf{#1}}}
\newcommand{\mDefine}{\ensuremath{ \stackrel{\triangle}{=} }}
\newcommand{\mGoesTo}{\ensuremath{\rightarrow}}
\newcommand{\mGoesToAs}[1]{\ensuremath{\underset{#1}{\longrightarrow}}}
\newcommand{\mRR}{\ensuremath{\mathbb{R}}}
\newcommand{\mZZ}{\ensuremath{\mathbb{Z}}}
\newcommand{\mDistributedBy}{\ensuremath{\sim}}
\newtheorem{theorem}{Theorem}
\newtheorem{proposition}{Proposition}
\newcommand{\ErrexpMAC}{\ensuremath{E^\textrm{MAC}}} 
\newcommand{\ErrexpSU}{\ensuremath{E^\textrm{SU}}} 
\newcommand{\ErrexpLinearSU}{\ensuremath{E^\textrm{SU}_\textrm{L}}} 
\newcommand{\ErrexpLinearBestKnownSU}{\ensuremath{\underline{E}^\textrm{SU}_\textrm{L}}} 
\newcommand{\EachievableSU}{\ensuremath{\underline{E}^\textrm{SU}}} 
\newcommand{\ErSU}{\ensuremath{E_r^\textrm{SU}}} 
\newcommand{\EexSU}{\ensuremath{E_{ex}^\textrm{SU}}} 
\newcommand{\RcrSU}{\ensuremath{R_{cr}^\textrm{SU}}} 
\newcommand{\RexSU}{\ensuremath{R_{ex}^\textrm{SU}}} 
\newcommand{\ErG}{\ensuremath{E_r^\textrm{G}}} 
\newcommand{\ErGc}{\ensuremath{E_{r3}^\textrm{G}}} 
\newcommand{\ErSWc}{\ensuremath{E_{r3}^\textrm{SW}}} 
\newcommand{\ErUG}{\ensuremath{\overline{E_r^\textrm{G}}}} 
\newcommand{\Rstruct}{\ensuremath{\mset{R}_\textrm{struct}}} 
\newcommand{\DistributedNestingErrexp}{\ensuremath{E_r^\textrm{struct}}} 
\begin{document}

\allowdisplaybreaks

\title{Distributed Structure: Joint Expurgation for the Multiple-Access Channel}

\author{
\authorblockN{Eli Haim}
\authorblockA{Dept. of EE-Systems,
TAU\\
Tel Aviv, Israel \\
Email: elih@eng.tau.ac.il}
\and
\authorblockN{Yuval Kochman}
\authorblockA{School of CSE,
HUJI\\
Jerusalem, Israel \\
Email: yuvalko@cs.huji.ac.il}
\and
\authorblockN{Uri Erez\authorrefmark{1}}
\authorblockA{Dept. of EE-Systems,
TAU\\
Tel Aviv, Israel \\
Email: uri@eng.tau.ac.il}
\thanks{$^*$ This work was supported in part by the U.S. - Israel Binational Science
Foundation under grant 2008/455.}
\thanks{The results of this paper were presented in part in the International Symposium on Information Theory, 2011, St. Petersburg, Russia. Another part will be presented in the International Symposium on Information Theory, 2012, Cambridge, MA.}
}

\maketitle


\begin{abstract}
In this work we show how an improved lower bound to the error exponent of the memoryless multiple-access (MAC) channel is attained via the use of linear codes, thus demonstrating that structure can be beneficial even in cases where there is no capacity gain.
We show that if the MAC channel is modulo-additive, then any error probability, and hence any error exponent, achievable by a linear code for the corresponding single-user channel, is also achievable for the MAC channel.
Specifically, for an alphabet of prime cardinality, where linear codes achieve the best known exponents in the single-user setting and the optimal exponent above the critical rate, this performance carries over to the MAC setting.
At least at low rates, where expurgation is needed, our approach strictly improves performance over previous results, where expurgation was used at most for one of the users.
Even when the MAC channel is not additive, it may be transformed into such a channel. While the transformation is lossy, we show that the distributed structure gain in some ``nearly additive'' cases outweighs the loss, and thus the error exponent can improve upon the best known error exponent for these cases as well.
Finally we apply a similar approach to the Gaussian MAC channel. We obtain an improvement over the best known achievable exponent, given by Gallager, for certain rate pairs, using lattice codes which satisfy a nesting condition.
\end{abstract}

%
\section{Introduction}
%
\label{sec:intro}

The error exponent of the multiple access (MAC) channel is a long-standing open problem. While superposition and successive decoding methods lead to capacity, they may not be optimal in the sense of error probability: the decoding process may be improved by considering that the transmission of other users is a codeword, rather than noise. However, finding the optimal performance is a difficult task, beyond the difficulties encountered in a point-to-point channel.
Early results include the works of Slepian and Wolf~\cite{SlepianWolf73MAC}, Gallager~\cite{Gallager:1985} and Pokorny and Wallmeier~\cite{PokornyWallmeier1985}. Applying the results of \cite{SlepianWolf73MAC} to the important special case of a (modulo) additive MAC channel, e.g., the binary symmetric case, it follows that the random-coding exponent of the corresponding single-user channel is achievable for the MAC channel. This exponent is optimal above the critical rate \cite{SlepianWolf73MAC}. However, for lower rates it is outperformed by the expurgated exponent (in the single-user case). The reason that the expurgated exponent is not achieved in~\cite{SlepianWolf73MAC} is that the sum of two good (expurgated) single-user codebooks does not result in a good single-user one, and in particular, the sum of two codebooks with good minimum-distance properties may not be good in that respect.
Liu and Hughes~\cite{LiuHughes:1996} and recently Nazari et al.~\cite{NazariAnastasopoulosPradhan:2010:arxiv} have proposed improvements over earlier results. Specifically, Nazari et al. suggest to apply expurgation to one of the codebooks. While this certainly improves performance, it still does not allow to achieve the single-user expurgated exponent.

For additive MAC channels we make the basic observation, that by ``splitting'' a linear codebook between the users, any error probability achievable in the corresponding single-user channel using linear codebooks is achievable for the MAC channel as well. This implies for prime (e.g. binary) alphabets, that the best currently known error exponents for \emph{any} code (not necessarily linear) are achievable for the MAC channel, including the random-coding and expurgated exponents. The improvement over previous results stems from the use of linear codes, which are inherently expurgated; thus using them provides ``joint expurgation'' even in a distributed setting.

But what happens outside the special case of additive channels? To see this, we go back to settings where the application of linear codes to additive communications networks has a capacity advantage, see e.g.~\cite{KrithivasanPradhan:2009, WilsonNarayananPfisterSprintson:2010, PhilosofZamirErezKhisti:2011}.
We are inspired by the fact that in the context of first-order (capacity) analysis of networks, the advantage of linear codes has indeed been extended to some non-additive channels~\cite{ErezZamir:2008}.
In~\cite{ErezZamir:2008} a modulo-lattice transformation is derived, that allows to obtain a virtual additive channel from any original MAC channel, albeit with a loss of capacity. It is shown in~\cite{ErezZamir:2008} that in some situations, the gain offered by the ability to use linear codes outweighs the loss inflicted by the transformation. In this work we adopt the same ideas to the MAC exponent problem: we show that for MAC channels that are ``nearly symmetric'', indeed the transformation in conjunction with using linear codes improves upon the best known exponents so far at low rates. We note that when one considers less symmetric channels, the results of~\cite{NazariAnastasopoulosPradhan:2010:arxiv} outperform those of the new scheme.

The technique we propose, of splitting a linear codebook, may be interpreted as nested linear codebooks, where the codebook of one user is nested in that of the the other. We leverage this observation to extend our approach beyond discrete alphabets, and consider the exponent of the Gaussian MAC channel. As in the discrete case, the sum of the codebooks, as seen by the decoder, is a single linear code, which is inherently expurgated. However, unlike the discrete case, the exponent we obtain is inferior to the single-user exponent. Moreover, there is a rate loss in comparison to the single user capacity. Still, despite this loss, we improve upon the best previously known error exponent~\cite{Gallager:1985} for certain power pairs and certain rate pairs.

The rest of the paper is organized as follows.
We start with the discrete MAC channel, where
Section~\ref{sec:discrete:channel_model} presents the background and definitions.
Then, Section~\ref{sec:discrete:coding_technique} describes the coding technique for the modulo additive MAC channel.
Section~\ref{sec:discrete:transformation} describes a technique for transforming a general discrete MAC channel into a modulo additive one (with some loss).
Section~\ref{sec:binary_case} presents an analysis of the special case of the binary MAC channel.
We then turn to the Gaussian MAC channel, where after some background on error exponents of Gaussian channels in Section~\ref{sec:Gaussian:preliminaries}, 
Section~\ref{sec:Gaussian:distributive_nesting_code} describes the coding technique by using distributive nesting, and derives its performance.
Finally, in Section~\ref{sec:discussion} we discuss the results and give some conclusions.

%
\section{Discrete Case: Background and Definitions}
%
\label{sec:discrete:channel_model}

\subsection{Single-User Channel}
\label{sec:discrete:channel_model:signel_user}

Consider the single-user discrete memoryless channel (DMC) defined by $P_{Y|X} (\cdot|\cdot)$,
where $X$ and $Y$ are the channel input and output, respectively, with discrete alphabets $\mset{X}$ and $\mset{Y}$.
We recall some results regarding the error exponent of this channel, see \cite{GallagerBook1968}.

The error exponent of the channel is defined as
\begin{align}
\label{eq:exponent}
\ErrexpSU(R) = \limsup_{n \mGoesTo \infty} -\frac{1}{n} \log \epsilon_n,
\end{align}
where $\epsilon_n$ is the minimal possible error probability of codes (averaged over the codewords) with block length $n$ and rate~$R$.

The best known achievable error exponent for this channel, denoted by $\EachievableSU(R)$, is given by the maximum between the expurgated error exponent $\EexSU(R)$ and the random-coding error exponent $\ErSU(R)$, where~\cite{GallagerBook1968}:
\begin{align*}
\ErSU(R) &= \max_{0 \leq \rho \leq 1} \max_{P_X} \left[ E_0(\rho, P_X) - \rho R \right],
\end{align*}
where $P_X$ is some distribution over the scalar channel input~$\mset{X}$ and
\begin{align*}
E_0(\rho, P_X) \mDefine
- \log \sum_{y \in \mset{Y}} \left(
\sum_{x \in \mset{X}} P_X(x) P_{Y|X}(y|x)^{1/(1+\rho)},
\right)^{1+\rho}.
\end{align*}
The expurgated exponent is given by:
\begin{align*}
\EexSU(R) &= \sup_{\rho \geq 1} \max_{P_X} \left[ E_x(\rho, P_X) - \rho R \right],
\end{align*}
where
\begin{align*}
E_x(\rho, P_X) \mDefine
- \rho \log
\sum_{x_1 \in \mset{X}}
\sum_{x_2 \in \mset{X}}
P_X(x_1) P_X(x_2)
\\ \times
\left(
\sum_{y \in \mset{Y}}
\sqrt{P_{Y|X}(y|x_1) P_{Y|X}(y|x_2)}
\right)^{1/\rho}.
\end{align*}

The expurgated exponent is larger than the random-coding exponent below some rate $\RexSU$ (this range is thus called ``the expurgation region''). Above the critical rate $\RcrSU$, the random-coding exponent is larger, and is known to be optimal.

\subsection{MAC Channel}
\label{sec:discrete:channel_model:MAC}

Consider a two-user discrete memoryless MAC channel $P_{Y|X_1,X_2}$, where $X_1, X_2$ are the channel inputs and $Y$ is its output, over (discrete) alphabets $\mset{X}_1, \mset{X}_2$ and $\mset{Y}$ respectively.
Denote the codebook of user $i$ by $\mset{C}_i$, and its rate by $R_i = \nicefrac{1}{n}\log \cardinality{\mset{C}_i}$.

Following Slepian and Wolf~\cite{SlepianWolf73MAC}, we define the error event as the event that at least one of the messages from the message pair is decoded in error.\footnote{Other definitions, leading to an error exponent region, were considered in~\cite{WengPradhanAnastasopoulos:2008}.}
The error exponent of the MAC channel is defined as
\begin{align}
\label{eq:MAC:errexp:def}
\ErrexpMAC( R_1, R_2) = \limsup_{n \mGoesTo \infty} -\frac{1}{n} \log \epsilon_n,
\end{align}
where $\epsilon_n$ is the minimal possible error probability for codes of length $n$, with the rate-pair
$(R_1,R_2)$.

Slepian and Wolf~\cite{SlepianWolf73MAC} found an achievable error exponent that is given by the minimum of three random-coding error exponents corresponding to different error events.\footnote{Slepian and Wolf~\cite{SlepianWolf73MAC} considered a more general case of a MAC channel with correlated sources, and obtained with an achievable error exponent that is the minimum of four error exponents. Gallager~\cite{Gallager:1985} reformulated this result to the channel-only problem, in which case, the results simplify to only three of the exponents.} The first two correspond to making an erroneous decision on one message, by a genie-aided decoder, i.e., one that has knowledge of the message of the other user as side information. The third error event corresponds to making an erroneous decision in both messages. For positive rates, the third exponent, denoted by $\ErSWc$, is equal to the error exponent of a single-user channel with input equal to the input-pairs of the MAC channel (still statistically independent symbol-pairs) and with rate equal to the sum rate. Therefore $\ErSWc$ depends only on the sum rate (see also~\cite{Gallager:1985}).
Each of these three events amounts to an error event over a single-user channel. Therefore, each exponent is equal to Gallager's random coding error exponent~\cite{GallagerBook1968} for the corresponding single-user channel.

\subsection{Additive-Noise Single-User Channel}
\label{sec:discrete:channel_model:additive:single_user}

Consider the following DMC:
\begin{align}
\label{eq:single-user}
Y = X \oplus N,
\end{align}
where all variables are defined over the alphabet  $\mathbb{Z}_m = \{0,1,\ldots,m-1\}$ and $\oplus$ denotes addition over this alphabet, i.e., modulo an integer $m$.
The noise $N$ is additive, i.e., statistically independent of the channel input $X$.

The random-coding error exponent of this channel~\eqref{eq:single-user} can be expressed in terms of R\'{e}nyi entropy (see e.g.~\cite{ErezZamir2001}):
\begin{align*}
\ErSU(R) = \max_{0 \leq \rho \leq 1} \rho \left[ \log m - h_\frac{1}{1+\rho}(N) - R \right],
\end{align*}
where $h_\beta(N)$ is the R\'{e}nyi entropy of order $\beta$, and is defined by\footnote{In the limit of $\rho=1$, R\'{e}nyi entropy becomes the Shannon entropy.}
\begin{align}
\label{eq:Reny_entropy}
h_\beta(N) = \frac{\beta}{1-\beta} \log \left( \sum_{n=0}^{m-1} P_N(n)^\beta \right)^\frac{1}{\beta}.
\end{align}
As for general channels, the best known error exponent of this channel is larger than $\ErSU(R)$ in the expurgation region, as can be seen in Figure~\ref{fig:single_user_ee_comparison}.
\begin{figure}[tb]
\centering
\includegraphics[width=\columnwidth]{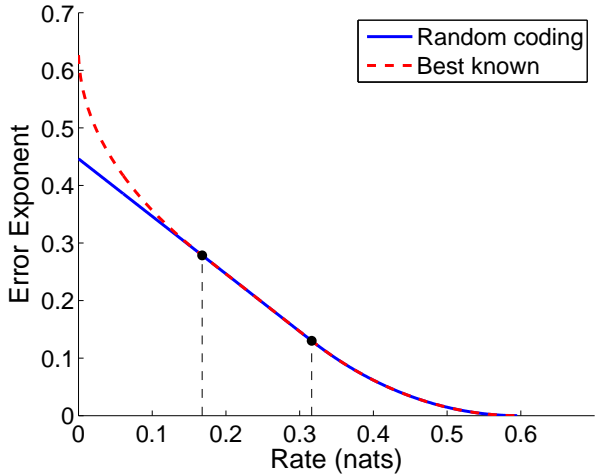} 
\caption{Comparing the random coding error exponent of a additive-noise single-user channel, with the best known error exponent. The channel is an additive binary symmetric channel (BSC) with noise $\mDistributedBy\textrm{Bernoulli}(0.02)$.
The two dots show the expurgation rate and the critical rate of this channel respectively.}
\label{fig:single_user_ee_comparison}
\end{figure}


In the context of additive channels, it is important to consider \emph{linear} codes. We define a linear code~$\mset{C}$ via a $k \times n$ generating matrix $G$, by
\begin{align}
\label{eq:linear_code}
\mset{C} = \{ \mvec{c}: \mvec{c} = \mvec{u} G, \mspcc \mvec{u}\in\mathbb{Z}_m^k \},
\end{align}
The rate\footnote{We assume a full-rank matrix $G$.} is equal to $R = \nicefrac{k}{n} \log m$.
Clearly, for any rate, there exists a linear code of this rate asymptotically as $n \mGoesTo \infty$.
We define  $\ErrexpLinearSU(R)$ to be the error exponent of linear codes, i.e., as \eqref{eq:exponent}, except that $\epsilon_n$ is the minimal possible error probability of \emph{linear} codes only.
We also denote the best known achievable error exponent of linear codes by $\ErrexpLinearBestKnownSU(R)$.
We note that for single-user additive channels of the form \eqref{eq:single-user}, when the alphabet size $m$ is a prime, the best known error exponent of linear codes, $\ErrexpLinearBestKnownSU(R)$, is equal to the best known error exponent of the channel $\EachievableSU(R)$ (see \cite{Dobrushin63, GallagerBook1968, BargForney2002}), and in particular is optimal above the critical rate.
In addition, we note that for linear codes the average error probability (over the codewords) is equal to the maximal error probability, due to their structure (i.e., the maximum likelihood decoding regions are identical up to translation).

\subsection{Additive-Noise MAC Channel}
\label{sec:discrete:channel_model:additive_MAC}

A channel which is of particular interest in this work is the additive MAC channel
\begin{align}
\label{eq:mac:additive_channel}
Y &= X_1 \oplus X_2 \oplus N,
\end{align}
where all variables are defined over the alphabet  $\mathbb{Z}_m = \{0,1,\ldots,m-1\}$ and $\oplus$ denotes addition over this alphabet, i.e., modulo an integer $m$.
The noise $N$ is additive, i.e., is statistically independent of the pair $(X_1,X_2)$.

Viewing the joint codebook $X = X_1 \oplus X_2$ as a single-user codebook, we get the channel~\eqref{eq:single-user} (over the same alphabet)
\begin{align}
\label{eq:single-user:associated}
Y = X \oplus N,
\end{align}
which we call the \emph{associated single-user channel} of~\eqref{eq:mac:additive_channel}.
Any codebook pair $(\mset{C}_1,\mset{C}_2)$ for the MAC channel can be used to construct a \emph{corresponding codebook} for its associated single-user channel, by the Minkowski sum codebook $\mset{C} = \mset{C}_1 + \mset{C}_2$.
However, not every single-user codebook $\mset{C}$ can be decomposed in such a manner.
Moreover, since the associated single-user channel is equivalent to cooperation between the encoders, then when comparing the MAC channel to its associated single-user one, it follows that
\begin{align*}
\ErrexpMAC( R_1, R_2) \leq \ErrexpSU(R_1+R_2).
\end{align*}

For additive MAC channels, in~\cite{SlepianWolf73MAC} it is shown that
\footnote{Since it can be shown that for the discrete additive MAC channel, out of the three error exponents discussed above, the third, $\ErSWc$, always dominates.}
\begin{align*}
\ErrexpMAC(R_1, R_2) \geq \ErSU(R_1+R_2).
\end{align*}
Thus, $\ErrexpMAC(R_1, R_2)$ is equal to $\EachievableSU(R_1+R_2)$ above the expurgation rate and optimal above the critical rate.
However, the best known error exponent for the associated single-user channel though, is larger in the expurgation region; recall Figure~\ref{fig:single_user_ee_comparison}.

We note that simple time sharing, where every user uses an expurgated codebook, improves on the Slepian-Wolf random-coding bound~\cite{SlepianWolf73MAC} in some cases, particularly for low enough rates and as the channel noise becomes weaker.

Since~\cite{SlepianWolf73MAC}, there were several improvements~\cite{LiuHughes:1996, NazariAnastasopoulosPradhan:2010:arxiv} to the achievable error exponent. However, these do not close the gap to the best known error exponent of the associated single-user channel.
In the next section, we close this gap for modulo-additive MAC channels, by attaining expurgation for all users.

%
\section{Coding for Modulo-Additive Discrete MAC Channels}
%
\label{sec:discrete:coding_technique}


In this section we first describe a coding scheme for additive-noise discrete MAC channels, that achieves the best known error exponent of linear codes for its associated single-user channel. In particular, for prime alphabet size, it achieves the best known error exponent for the associated single-user channel. This is equivalent to full cooperation of the encoders, and thus it is optimal (in terms of error exponent) whenever the optimum is known for the single-user channel (i.e., above its critical rate).


Consider the additive-noise MAC channel, as given in~\eqref{eq:mac:additive_channel}, with alphabet size~$m$.
We construct a codebook pair for the MAC channel using linear codes.
We use a good linear code for the associated single-user channel \eqref{eq:single-user:associated}, which we decompose into two linear sub-codes, one for each user.


Let $G$ be a $k \times n$ generating matrix of a linear code $\mset{C}$~(see~\eqref{eq:linear_code}) with rate~$R = \nicefrac{k}{n} \log m$.
For some integers $k_1+k_2=k$, define the rates \[R_i = \frac{k_i}{n} \log m , \ i=1,2. \]
Decompose the codeword $\mvec{c}$ into two codewords:
\begin{align}
\mvec{c} &= \mvec{c}_1 \oplus \mvec{c}_2
= \left( \mvec{u}_{1 \times k_1} \mid \mvec{u}_{1 \times k_2} \right) \left(
\begin{array}{c}
G_{k_1 \times n}
\\ \hline
G_{k_2 \times n}
\end{array}
\right)
\\&\mDefine \mvec{u}_1 G_1 \oplus \mvec{u}_2 G_2.
\end{align}
Thus, we have a pair of codebooks:
\begin{align}
\mset{C}_i = \{ \mvec{c}_i: \mvec{c}_i = \mvec{u}_i G_i, \mspcc \mvec{u}_i\in\mathbb{Z}_m^{k_i} \}; \mspcd i=1,2.
\end{align}
Therefore, the sum of codewords is indistinguishable from a codeword of the single-user code with $R = R_1 + R_2$.
Clearly, for any rate-pair such a construction is possible asymptotically as $n \mGoesTo \infty$. A similar claim holds for a general number of users as well. We thus have the following.


\begin{proposition}
The coding technique above achieves the best error probability of linear codes for the single-user channel. Thus, it achieves the exponent $\ErrexpLinearBestKnownSU(R)$, and for prime $m$ it achieves $\EachievableSU(R)$ as well. 
\end{proposition}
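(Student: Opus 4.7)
The plan is to identify the MAC code with a linear single-user code in such a way that MAC decoding error coincides exactly with single-user decoding error, and then to invoke the existing best linear-code bound cited in Section~\ref{sec:discrete:channel_model:additive:single_user}.

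First I would fix any linear single-user code $\mset{C}$ with full-rank generating matrix $G\in\mZZ_m^{k\times n}$ at rate $R=R_1+R_2=\nicefrac{k}{n}\log m$, and split $G$ into the first $k_1$ and last $k_2$ rows to obtain $G_1,G_2$, yielding the constructed pair $(\mset{C}_1,\mset{C}_2)$. The core algebraic observation I would establish is that the Minkowski sum satisfies $\mset{C}_1\oplus \mset{C}_2 = \mset{C}$ and that this decomposition is \emph{unique}: since the rows of $G$ are linearly independent (over $\mZZ_m$, or, when $m$ is prime, over the field $\mZZ_m$), $c_1\oplus c_2 = c_1'\oplus c_2'$ with $c_i,c_i'\in\mset{C}_i$ forces $\mvec{u}_1=\mvec{u}_1'$ and $\mvec{u}_2=\mvec{u}_2'$, hence $c_i=c_i'$. (For non-prime $m$ one restricts attention to generator matrices whose rows remain independent over $\mZZ_m$, as is implicit in the definition of linear codes used here.)

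Next, I would set up the MAC decoder as follows: given the channel output $Y=X_1\oplus X_2\oplus N=C\oplus N$ with $C\in\mset{C}$, apply the single-user maximum-likelihood decoder for $\mset{C}$ on the associated single-user channel \eqref{eq:single-user:associated} to produce $\hat C\in\mset{C}$, then use the unique decomposition to define $(\hat X_1,\hat X_2)$ as the unique pair in $\mset{C}_1\times\mset{C}_2$ with $\hat X_1\oplus \hat X_2=\hat C$. By the MAC error-event definition of Section~\ref{sec:discrete:channel_model:MAC}, a MAC error occurs iff $(\hat X_1,\hat X_2)\neq (X_1,X_2)$, which by the unique decomposition is iff $\hat C\neq C$. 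Hence the MAC error probability under this decoder equals the single-user error probability of $\mset{C}$ on the associated channel. Averaging over messages is unaffected because the linear structure makes all ML decoding regions translates of one another, so the average and maximal error probabilities coincide.

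Taking $\mset{C}$ to be the linear code at rate $R=R_1+R_2$ whose error probability matches the best known bound for linear codes then yields the exponent $\ErrexpLinearBestKnownSU(R_1+R_2)$ on the MAC. For prime $m$, invoking the result cited in Section~\ref{sec:discrete:channel_model:additive:single_user} (\cite{Dobrushin63, GallagerBook1968, BargForney2002}) that $\ErrexpLinearBestKnownSU=\EachievableSU$ for the additive-noise single-user channel, the same scheme achieves $\EachievableSU(R_1+R_2)$, completing the proof. The only substantive step is the unique-decomposition argument; the rest is bookkeeping. I expect the mildly delicate point to be ensuring the right-hand side of the exponent is controlled on the \emph{maximal} (not only average) error probability when one plugs in the best linear code—but as noted above this is automatic for linear codes due to the translation symmetry of the ML regions.
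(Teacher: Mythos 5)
Your proposal is correct and follows essentially the same route as the paper, which simply presents the splitting of the full-rank generator matrix $G$ into $G_1,G_2$ and observes that the sum of the two users' codewords is indistinguishable from a codeword of the single-user linear code $\mset{C}$ at rate $R_1+R_2$. Your added details---the unique decomposition of $c_1\oplus c_2$ from linear independence of the rows of $G$, the resulting equivalence of the MAC error event with the single-user decoding error, and the equality of average and maximal error probability for linear codes---are exactly the steps the paper leaves implicit.
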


\emph{Remark}: The result also holds for an additive MAC channel~\eqref{eq:mac:additive_channel} where the alphabet size~$m$ is a power of a prime and addition is over the field.


The previously best known error exponent is given by Nazari~et~al.~\cite{NazariAnastasopoulosPradhan:2010:arxiv}. In their derivation, codewords are expurgated from only one of the codebooks. Since our bound achieves the best known error exponent of the associated single-user channel (for the special case of additive MAC channels with prime alphabet size), it must be at least as good the one found by Nazari et al.
For low enough rate-pairs we expect our bound to be strictly better, since full expurgation is required in order to achieve the error exponent of the associated single-user channel.
When considering more than two users, the gap is expected to increase since expurgation of one user becomes less significant. In the sequel, we show how this advantage can be leveraged to non-additive MAC channels.  


The distributed-structure code construction presented in this section can be interpreted in terms of nested linear codes. Two linear codes are nested if one of them (the \emph{coarse} codebook) is a subset of the other (the \emph{fine} codebook). For the code described in this section, the single-user codebook is the fine code $\mset{C}$. The coarse codebook $\mset{C}_1 \subseteq \mset{C}$ is the codebook of the first user. This forms a quotient group $\mset{C}/\mset{C}_1$, where any member of this group (i.e., coset) is a different ``translate'' of the coarse codebook $\mset{C}_1$. A selection of representatives from every coset forms a codebook $\mset{C}_2$ for the second user. Any such selection of coset representatives leads to the same fine code $\mset{C}_1 \oplus \mset{C}_2 = \mset{C}$, and therefore is a good selection. As a special case, in the code construction which is described above, the coset representatives are selected such that they form a linear code.


This nested linear codes approach can be extended to the continuous alphabet case.
Consider the modulo-additive channel where the channel alphabets and noise are continuous:
\begin{align}
\label{eq:continuous:modulo_additive_channel:one_dimentional}
Y = (X_1 + X_2 + N) \bmod 1.
\end{align}
First assume that the input alphabets are $p^{-1} \cdot \mZZ$ (which is equal to the integers multiplied by $\nicefrac{1}{p}$), where $p$ is prime.
Linear codes achieve the best known error exponent for a single user modulo-prime additive channel (See Section~\ref{sec:discrete:coding_technique}).
Therefore for prime $p$, nested linear codes achieve this exponent.
Taking $p$ to infinity one can approach as closely as desired an optimal codebook pair for continuous alphabet inputs.
This will lead to a distributed coding technique for the Gaussian MAC channel in the sequel.

%
\section{Transforming a General Discrete MAC Channel into an Additive Channel}
%
\label{sec:discrete:transformation}

With the aim of applying a similar scheme to general (non-additive) discrete memoryless MAC channels $P_{Y|X_1,X_2}$, in this section we describe a method for transforming such channels into additive-noise MAC channels. We refer to the obtained channel after the transformation as the resulting \emph{virtual} channel.
The transformation is a discrete and scalar modification of the \emph{Modulo-Lattice Transformation} for continuous MAC channels~\cite{ErezZamir:2008}.

The transformation is defined for any finite alphabet size $m$, regardless of the alphabet sizes of the inputs and the output. For simplicity, we assume throughout this section that $m$ is prime.
Let $V_i \in \mathbb{Z}_m$ be the input of the $i$th user to the virtual channel, and $U_i \mDistributedBy \textrm{Uniform}(\mathbb{Z}_m)$ be its dither (i.e., common randomness at the $i$th encoder and at the decoder), which is statistically independent of the dither of the other user and of $V_1,V_2$.
Each encoder computes $X'_i = V_i \oplus U_i$ and applies a scalar precoding function $f_i:\mathbb{Z}_m \mGoesTo \mset{X}$ to it. The inputs to the channel are therefore given by
\begin{align}
X_i = f_i(X'_i).
\end{align}
Note that due to the dither, $X'_i$ is uniformly distributed over $\mathbb{Z}_m$ and is statistically independent of $V_1,V_2$.
Let
\begin{align*}
S = k_1 X'_1 \oplus k_2 X'_2,
\end{align*}
where $k_i \in \mZZ_m$, and multiplication is over $\mZZ_m$. Let $\hat{S} = g(Y)$ be some scalar ``estimator'' function of~$S$ from the channel output $Y$.
Denote the estimation error by $N = \hat{S} \ominus S$ (i.e., a subtraction operation over $\mZZ_m$).
We define the output of the virtual channel as
\begin{align}
Y' &\mDefine \hat{S} \ominus (k_1 U_1 \oplus k_2 U_2)
\end{align}

\begin{proposition}[The virtual MAC channel]
\label{prop:virtual}
Applying the transformation leads to the following virtual channel:
\begin{align}
\label{eq:virtual_MAC}
Y' = k_1 V_1 \oplus k_2 V_2 \oplus N,
\end{align}
where $N=\hat{S} \ominus S$ is statistically independent of the channel inputs $(V_1,V_2)$.
\end{proposition}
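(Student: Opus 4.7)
The proof splits naturally into two parts: the algebraic identity and the statistical independence claim.

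My plan is to begin with the identity $Y' = k_1 V_1 \oplus k_2 V_2 \oplus N$, which is essentially a substitution exercise. Starting from the definition
\[
Y' = \hat{S} \ominus (k_1 U_1 \oplus k_2 U_2),
\]
I would rewrite $\hat{S} = S \oplus N = k_1 X'_1 \oplus k_2 X'_2 \oplus N$, then expand $X'_i = V_i \oplus U_i$ and use distributivity of multiplication by $k_i$ over addition in $\mathbb{Z}_m$. The dither terms $k_1 U_1$ and $k_2 U_2$ cancel against the subtraction, leaving exactly $k_1 V_1 \oplus k_2 V_2 \oplus N$. This is routine once the expressions are lined up.

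The substantive part is showing that $N$ is statistically independent of $(V_1, V_2)$. The key observation is the standard dithering argument: because $U_i$ is uniform on $\mathbb{Z}_m$ and independent of $V_i$ (and of $U_{3-i}$, $V_{3-i}$), the conditional distribution of $X'_i = V_i \oplus U_i$ given $(V_1,V_2)$ is uniform on $\mathbb{Z}_m$, and the joint conditional distribution of $(X'_1, X'_2)$ given $(V_1, V_2)$ is uniform on $\mathbb{Z}_m^2$. Since this conditional distribution does not depend on $(V_1, V_2)$, the pair $(X'_1, X'_2)$ is unconditionally independent of $(V_1, V_2)$.

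Now I would chain this through the processing. Since $X_i = f_i(X'_i)$ and $Y$ is produced from $(X_1, X_2)$ through the memoryless channel $P_{Y|X_1,X_2}$ (whose randomness is independent of everything else), the triple $(Y, X'_1, X'_2)$ is a function of $(X'_1, X'_2)$ and channel noise, both independent of $(V_1, V_2)$. Consequently $\hat{S} = g(Y)$ and $S = k_1 X'_1 \oplus k_2 X'_2$ are both independent of $(V_1, V_2)$, and so is $N = \hat{S} \ominus S$.

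The only thing to watch is the ordering of the independence argument: one should establish the joint independence $(X'_1, X'_2) \perp (V_1, V_2)$ before passing through $f_i$ and the channel, rather than trying to argue componentwise and then combine. I do not anticipate any obstacle beyond being careful with the conditioning; no bounds, no exponent machinery, and no properties of $g$ or $f_i$ are needed — the proposition is a structural consequence of dithering and group-theoretic cancellation.
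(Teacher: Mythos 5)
Your proof is correct and the algebraic part follows essentially the same route as the paper: the paper's proof is exactly the telescoping $Y' = \hat{S} \ominus S \oplus S \ominus (k_1 U_1 \oplus k_2 U_2) = N \oplus k_1(V_1 \oplus U_1) \oplus k_2(V_2 \oplus U_2) \ominus (k_1 U_1 \oplus k_2 U_2) = k_1 V_1 \oplus k_2 V_2 \oplus N$, which is your substitution written in the other direction. The independence claim is not argued inside the paper's proof at all --- it is only asserted in the text preceding the proposition (``due to the dither, $X'_i$ is uniformly distributed over $\mathbb{Z}_m$ and is statistically independent of $V_1,V_2$'') --- so your careful chain, establishing joint independence of $(X'_1,X'_2)$ from $(V_1,V_2)$ first and only then passing through $f_i$, the memoryless channel, and $g$, supplies exactly the detail the paper leaves implicit.
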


\begin{proof}
We have:
\begin{align}
Y' &= \hat{S} \ominus (k_1 U_1 \oplus k_2 U_2) \nonumber
\\&= \hat{S} \ominus S \oplus S \ominus (k_1 U_1 \oplus k_2 U_2) \nonumber
\\&= N \oplus k_1(V_1 \oplus U_1) \oplus k_2(V_2 \oplus U_2) \ominus (k_1 U_1 \oplus k_2 U_2)\nonumber
\\&= k_1 V_1 \oplus k_2 V_2 \oplus N. \nonumber
\end{align}
\end{proof}


Notice that the transformation is not unique, and one is free to choose the alphabet size $m$, the precoding functions $f_i(\cdot)$ and the estimator of $S$.
We call any virtual MAC channel~\eqref{eq:virtual_MAC} that can be obtained by some choice of parameters, a feasible virtual MAC channel.
Since we assume that the alphabet size $m$ is prime, it follows that a feasible single-user channel: 
\begin{align}
\label{eq:virtual_MAC:associated_single_user}
Y = X \oplus N,
\end{align}
is the associated single-user channel of a feasible virtual MAC channel \eqref{eq:virtual_MAC}.


Applying this transformation to any MAC channel, we have the following.
\begin{proposition}
\label{prop:virtual_MAC:bound}
Let $\epsilon_n$ be the best error probability achievable with a code of length $n$ on a MAC channel. Then \[ \epsilon_n \leq \tilde \epsilon_n, \] where $\tilde \epsilon_n$ is the best error probability achievable by a \emph{linear} code of the same length on a feasible virtual MAC channel~\eqref{eq:virtual_MAC}.
\end{proposition}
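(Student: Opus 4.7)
My plan is to show that any linear code attaining error probability $\tilde{\epsilon}_n$ on a feasible virtual MAC channel can be converted, via the transformation of this section, into a (deterministic) code for the original MAC channel with at least as good error probability. Fix a choice of parameters $(m, f_1, f_2, k_1, k_2, g)$ defining a feasible virtual channel, and let $(\mset{C}_1, \mset{C}_2)$ be a pair of length-$n$ linear codebooks — obtainable by the splitting construction of Section~\ref{sec:discrete:coding_technique} applied to the virtual channel, which is modulo-additive by Proposition~\ref{prop:virtual} — that attains $\tilde{\epsilon}_n$.

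The composite scheme on the original MAC channel works symbolwise. At time~$t$, user~$i$ draws a dither $u_{i,t} \mDistributedBy \textrm{Uniform}(\mZZ_m)$ (independently across $t$ and across users), shared with the decoder, and transmits $x_{i,t} = f_i(v_{i,t} \oplus u_{i,t})$, where $v_{i,t}$ is the $t$th symbol of the chosen codeword $\mvec{v}_i \in \mset{C}_i$. The decoder forms $\hat{s}_t = g(y_t)$ and $y'_t = \hat{s}_t \ominus (k_1 u_{1,t} \oplus k_2 u_{2,t})$, and then applies the decoder of the linear code to $\mvec{y}'$. By Proposition~\ref{prop:virtual} applied per symbol (and using the memorylessness of the original channel), the joint distribution of $(\mvec{v}_1, \mvec{v}_2, \mvec{y}')$ matches the distribution induced by transmitting $(\mvec{v}_1, \mvec{v}_2)$ across $n$ independent uses of the virtual channel~\eqref{eq:virtual_MAC}. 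Hence the error probability of the composite scheme, averaged over the dithers, equals~$\tilde{\epsilon}_n$.

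It remains to derandomize the common randomness. A standard averaging argument yields a realization $(\mvec{u}_1, \mvec{u}_2)$ for which the message-averaged error probability is at most $\tilde{\epsilon}_n$; fixing it gives deterministic codebooks $\{f_i(\mvec{v}_i \oplus \mvec{u}_i) : \mvec{v}_i \in \mset{C}_i\}$ on the original channel of the same length $n$ and the same cardinalities as $\mset{C}_1, \mset{C}_2$, attaining $\epsilon_n \leq \tilde{\epsilon}_n$. I do not anticipate a genuine obstacle here: once Proposition~\ref{prop:virtual} provides the independence of $N$ from $(V_1, V_2)$, the transformation is a per-realization bijection between virtual-channel behavior and original-channel behavior whenever the dithers are known at both ends, and derandomization is routine. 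The only point worth stating carefully is that we need not restrict the virtual code to be linear in the proof (we use only the existence statement); linearity enters only through the definition of $\tilde{\epsilon}_n$.
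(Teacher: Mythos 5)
Your argument is correct and is exactly the reasoning the paper leaves implicit (it states Proposition~\ref{prop:virtual_MAC:bound} without a written proof, as a direct consequence of Proposition~\ref{prop:virtual}): simulate $n$ independent uses of the virtual channel on the original MAC channel via the dithered transformation, inherit the error probability of the linear code, and then fix a good dither realization by averaging. The one point you rightly flag---that memorylessness plus i.i.d.\ dithers makes the per-symbol noises $N_t$ i.i.d.\ and independent of the message pair---is the only step requiring care, and you handle it correctly.
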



Applying Proposition~\ref{prop:virtual_MAC:bound} to exponents, leads to our main result:
\begin{theorem}
\label{cor:MAC_achieves_the_vitual_SU}
For any MAC channel, and any associated feasible single-user channel~\eqref{eq:virtual_MAC:associated_single_user} with alphabet of prime cardinality,
\begin{align*}
\ErrexpMAC( R_1, R_2) \geq \EachievableSU(R_1+R_2).
\end{align*}
\end{theorem}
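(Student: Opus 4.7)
The plan is to chain together the two main constructions of the preceding sections. Fix a feasible virtual MAC channel~\eqref{eq:virtual_MAC}, $Y' = k_1 V_1 \oplus k_2 V_2 \oplus N$, over the prime alphabet $\mathbb{Z}_m$, whose associated single-user channel is the prescribed~\eqref{eq:virtual_MAC:associated_single_user}. Assume the coefficients $k_1, k_2$ are nonzero (else the theorem reduces trivially on the affected rate axis); since $m$ is prime, both are invertible in $\mathbb{Z}_m$. I would have each encoder pre-multiply its intended linear codeword $\mvec{c}_i$ by $k_i^{-1}$, so that feeding $V_i = k_i^{-1} \mvec{c}_i$ into the virtual channel produces $Y' = \mvec{c}_1 \oplus \mvec{c}_2 \oplus N$, a canonical additive MAC. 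Since multiplication by $k_i^{-1}$ preserves both linearity and rate of the codebook, nothing is lost in this reduction.

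Next, I would apply the distributed-structure construction of Section~\ref{sec:discrete:coding_technique} to this canonical additive MAC. Take a good linear code $\mset{C}$ of rate $R = R_1 + R_2$ for the single-user channel~\eqref{eq:virtual_MAC:associated_single_user}, split its generating matrix into two horizontal sub-blocks whose row counts realize rates $R_1$ and $R_2$ asymptotically, and assign the resulting sub-codes $\mset{C}_1$ and $\mset{C}_2$ to the two users. The sum codeword $\mvec{c}_1 \oplus \mvec{c}_2$ ranges over all of $\mset{C}$, so the ML decoder for the joint message pair inherits exactly the error probability of the ML decoder for $\mset{C}$ on the associated single-user channel. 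By the proposition opening Section~\ref{sec:discrete:coding_technique}, this achieves the exponent $\ErrexpLinearBestKnownSU(R_1 + R_2)$, which equals $\EachievableSU(R_1 + R_2)$ for prime $m$ by the discussion in Section~\ref{sec:discrete:channel_model:additive:single_user}.

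Finally, I would invoke Proposition~\ref{prop:virtual_MAC:bound}: the minimum error probability $\epsilon_n$ of any code on the original MAC channel is bounded by the error probability $\tilde\epsilon_n$ of the linear code just constructed on the virtual MAC. Taking $-\tfrac{1}{n}\log$ and $\limsup$ and applying the definition~\eqref{eq:MAC:errexp:def} then yields the claim. I do not expect a substantive obstacle: the heavy lifting is already packaged in Proposition~\ref{prop:virtual} (whose guarantee that $N$ is independent of $(V_1,V_2)$ is exactly what licenses the distributed-structure argument on the virtual channel) and in Proposition~\ref{prop:virtual_MAC:bound}. The only remaining bookkeeping is to verify that the invertibility trick strips the coefficients $k_i$ cleanly and that arbitrary asymptotic rate splits can be realized by integer-sized generator sub-blocks, both of which are routine.
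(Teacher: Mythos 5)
Your proposal is correct and follows essentially the same route as the paper: the paper's own proof is just the one-line invocation ``applying Proposition~\ref{prop:virtual_MAC:bound} to exponents,'' resting on the linear-code splitting of Section~\ref{sec:discrete:coding_technique} and the remark that for prime $m$ the channel~\eqref{eq:virtual_MAC:associated_single_user} is the associated single-user channel of the virtual MAC~\eqref{eq:virtual_MAC}. Your explicit pre-multiplication by $k_i^{-1}$ is precisely the mechanism the paper leaves implicit in that remark, so you have simply filled in the bookkeeping the authors omitted.
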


\emph{Remarks:}
\begin{itemize}


\item Notice that this transformation is lossy in terms of capacity. However, since the resulting channel is an additive-noise channel, efficient coding techniques and known bounds can be easily applied.
In particular, for MAC channels, expurgation in all the users can be applied by using linear codes as in Section~\ref{sec:discrete:coding_technique}.


\item We expect the benefit from this coding technique to outweigh the loss when the channel is ``close'' to additive. In the next section, we give a binary example which illustrates this property with a single parameter.


\item We note that this transformation is applicable to various non-additive network problems, where structure can improve the best-known achievable \emph{rate region} (see e.g.~\cite{KrithivasanPradhan:2009, WilsonNarayananPfisterSprintson:2010, PhilosofZamirErezKhisti:2011}). In such settings, the gain will appear also as a ``capacity gain'' rather than only in the probability of error.
\end{itemize}

%
\section{Binary Case}
%
\label{sec:binary_case}
In this section we confine the discussion to binary MAC channels, i.e. channels with binary inputs and output.
We denote this general (i.e., non-additive) channel $P_{Y|X_1,X_2}$ as:
\begin{align}
\label{eq:MAC_channel:discrete:additive_representation}
Y = X_1 \oplus X_2 \oplus Z,
\end{align}
where the ``additive'' noise $Z \mDefine Y \oplus (X_1 \oplus X_2)$ \emph{may depend} on the channel input pair $(X_1,X_2)$.

\subsection{Analysis of the Virtual Channel}
A natural choice for the parameter $m$ of the transformation is clearly $m=2$.
We select $f(x)=x$, $k_1=k_2=1$ and $\hat{S}=g(Y)=Y$. This leads to the following effective noise of the virtual channel: $N = Z$. However, $Z$ is statistically independent of $(V_1,V_2)$ due to the transformation. Therefore, the probability distribution of the effective additive noise $N$ of the virtual channel, is equal to the marginal distribution of $Z$, i.e.:
\begin{align}
\label{eq:binary_case:N}
N \mDistributedBy \textrm{Bernoulli}(\gamma),
\end{align}
with
\begin{align}
\label{eq:binary_case:p}
\gamma = \frac{1}{4} \sum_{x_1, x_2 \in \{0,1\}} \Pr(Z=1|X_1=x_1,X_2=x_2).
\end{align}
The virtual channel is then an additive MAC channel given by:
\begin{align}
Y = V_1 \oplus V_2 \oplus N,
\end{align}
where $N$, given in~\eqref{eq:binary_case:N}-\eqref{eq:binary_case:p}, is statistically independent of $(V_1,V_2)$.

\subsection{Example: Almost Additive Binary MAC Channel}

We now use the analysis of the previous subsection in order to study an example of an almost additive-noise binary MAC channel.
Specifically, we consider the following MAC channel:
\begin{align}
Y &= X_1 \oplus X_2 \oplus Z \label{eq:binary_MAC_example:channel}
\\Z &\mDefine Z_1 \oplus 1_{\{X_1 \neq X_2\}} \cdot Z_2, \label{eq:binary_MAC_example:noise}
\end{align}
where $Z_1 \mDistributedBy \textrm{Bernoulli}(q)$, $Z_2 \mDistributedBy \textrm{Bernoulli}(p)$, and $1_{\{X_1 \neq X_2\}}$ is the indicator function of the event $X_1 \neq X_2$.
The value of $p$ determines the deviation of the channel from additivity. For small $p$ the channel is nearly an additive MAC channel.

In Figure~\ref{fig:distance_from_additive} we compare the resulting error exponent of the virtual channel with the Slepian-Wolf random coding exponent~\cite{SlepianWolf73MAC} for symmetric rate pairs.\footnote{For the channel parameter $q=0.1$ of Fig.~\ref{fig:distance_from_additive} and zero rate-pair, the exponent of time sharing between expurgated codebooks is below the Slepian-Wolf random coding exponent for all $p$.} For the comparison we take the limit of zero rate-pair, where the gain due to expurgation is maximal.
\begin{figure}[tb]
\centering
\includegraphics[width=\columnwidth]{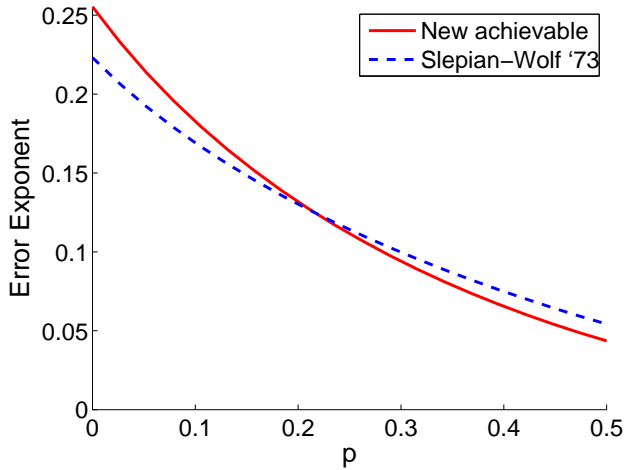} 
\caption{Comparing error exponents for the almost additive binary MAC~\eqref{eq:binary_MAC_example:channel}-\eqref{eq:binary_MAC_example:noise}. The dashed line is the Slepian-Wolf bound~\cite{SlepianWolf73MAC}. The solid line is the error exponent of the virtual channel, which is achieved according to Corollary~\ref{cor:MAC_achieves_the_vitual_SU}. Here $q=0.1$ and the comparison is at zero rate-pair.}
\label{fig:distance_from_additive}
\end{figure}
%
As $p$ increases, the coding technique developed in this paper gains less since the channel transformation looses more as the channel becomes less additive.

For small enough $p$ and for low enough rates we expect this bound to be strictly larger then the best known error exponent for this channel~\cite{NazariAnastasopoulosPradhan:2010:arxiv}. This is since~\cite{NazariAnastasopoulosPradhan:2010:arxiv} applies expurgation only to the user with larger rate, while the bound presented here achieves two-user expurgation.
Nazari et al.~\cite{NazariAnastasopoulosPradhan:2010:arxiv} studied a non-symmetric example, where $\Pr(Z=1|X_1=1,X_2=1) = \frac{1}{2}$, and all the other conditional probabilities of $Z$ are equal to $0.01$. In this case the coding scheme described here is inferior to the one of~\cite{NazariAnastasopoulosPradhan:2010:arxiv}, as expected since the channel is far from being additive.

%
\section{Gaussian Channels: Preliminaries}
%
\label{sec:Gaussian:preliminaries}

In this section we recall some results for Gaussian channels and give some definitions, similar to the ones presented in Section~\ref{sec:discrete:channel_model} for the discrete case.
There are two differences in the model with respect to the discrete case: the channel input is continuous and is subject to a power constraint; the additive noise is restricted to Gaussian.

\subsection{Single-User Channel}
\label{sec:Gaussian:preliminaries:single_user}

Consider the single-user additive white Gaussian noise (AWGN) channel:
\begin{align}
\label{eq:single_user:channel}
Y = X + Z,
\end{align}
where $X \in \mRR$ is the input to the channel and is subject to a power constraint:
\begin{align}
\label{eq:single_user:power_constarints}
\frac{1}{n} \sum_{j=1}^n x_j^2 \leq P,
\end{align}
where $n$ is the codeword length.
The noise $Z \mDistributedBy \mathcal{N}(0,N)$ is an additive noise, i.e., is statistically independent of $X$.
The capacity of this channel is given by $C(A) \mDefine \nicefrac{1}{2} \log(1+A)$, where $A\mDefine\nicefrac{P}{N}$ is the signal to noise ratio (SNR).

The error exponents $\ErrexpSU(R,A)$, $\EachievableSU(R,A)$, $\EexSU(R,A)$ and $\ErSU(R,A)$  are defined similar to the ones in Section~\ref{sec:discrete:channel_model:signel_user}.
However, while in the unconstrained case, the optimal distribution (in the sense of error exponent) of the input symbols was given by an i.i.d. distribution, in the constrained case, the optimal distribution includes statistical dependence due to codebook shaping (see e.g.,~\cite[Chapter~7]{GallagerBook1968}). Specifically, spherical shell codebooks are used, and this leads to the best known error exponents.
In particular, above the critical rate, the random coding error exponent is optimal, and is given by
\begin{align*}
&\ErSU(R,A) = \frac{A}{4\beta}\left[
(\beta+1) - (\beta-1) \sqrt{ 1 + \frac{4\beta}{A(\beta-1)} }
\right]
\\&\phantom{\ErSU(R,} +\frac{1}{2} \log \left\{ \beta - \frac{A(\beta - 1)}{2}\left[ \sqrt{1+\frac{4\beta}{A(\beta-1)}} - 1 \right] \right\},
\end{align*}
where $\beta \mDefine \exp(2R)$. The critical rate is
\begin{align*}
\RcrSU(A) = \frac{1}{2} \log \gamma,
\end{align*}
where
\begin{align*}
\gamma = \frac{1}{2}\left( 1 + \frac{A}{2} + \sqrt{ 1 + \frac{A^2}{4} } \right).
\end{align*}
Below the critical rate, the random coding error exponent is given by
\begin{align*}
\ErSU(R,A) = 1 - \gamma + \frac{A}{2} + \frac{1}{2} \log \left( \gamma - \frac{A}{2} \right) + \frac{1}{2}\log \gamma - R.
\end{align*}
The expurgated exponent is given by
\begin{align*}
\EexSU(R,A) = \frac{A}{4}\left( 1 - \sqrt{1 - \exp(-2R)} \right)
\end{align*}
and it is larger than the random coding error exponent below the expurgated rate, given by:
\begin{align*}
\RexSU(A) = \frac{1}{2} \log \left( \gamma - \frac{A}{4} \right).
\end{align*}

In the high-SNR limit $A \gg 1$, $\EachievableSU(R,A)$ approaches the Poltyrev exponent, defined by:
\begin{align}
\label{eq:poltyrev_exponent}
E_P(\mu) \mDefine \left\{
\begin{array}{ll}
0, & \mu \leq 1
\\
\frac{1}{2}\left[ (\mu-1) - \log \mu \right], & 1 < \mu \leq 2
\\
\frac{1}{2} \log \frac{e \mu}{4} , & 2 \leq \mu \leq 4
\\
\frac{\mu}{8} , & \mu \geq 4
\end{array}
\right. ,
\end{align}
where $\mu = (1+A)\exp(-2R)$.
The Poltyrev exponent at the critical values $\mu=1$, $\mu=2$ and $\mu=4$ correspond to $C(A)$, $\RcrSU(A)$ and $\RexSU(A)$, respectively.

\subsection{MAC Channel}
\label{sec:Gaussian:preliminaries:MAC}

Consider a two-user memoryless Gaussian MAC channel:
\begin{align}
\label{eq:awgn_mac:channel}
Y &= X_1 + X_2 + Z,
\end{align}
where $X_1, X_2$ are the inputs to the channel, and are subject to power constraints as in~\eqref{eq:single_user:power_constarints} with powers $P_1$ and $P_2$ respectively.
The additive noise $Z \mDistributedBy \mathcal{N}(0,N)$ is independent of the pair $(X_1,X_2)$.
Without loss of generality, let $P_1\geq P_2$.

We denote the SNRs by $A_i \mDefine P_i/N$ $(i=1,2)$.
Denote the codebook of user $i$ by $\mset{C}_i$, and its rate by $R_i = \nicefrac{1}{n}\log \cardinality{\mset{C}_i}$.
The error exponent of the channel $\ErrexpMAC( R_1, R_2, A_1, A_2 )$ is defined similar to~\eqref{eq:MAC:errexp:def}.

The best known achievable error exponent for this channel is given by Gallager~\cite{Gallager:1985}, denoted here by $\ErG(R_1,R_2,A_1,A_2)$.
This exponent is derived via analyzing the same error events which are defined by Slepian and Wolf~\cite{SlepianWolf73MAC}, and described in Section~\ref{sec:discrete:channel_model:MAC}. Thus, this exponent is also given by the minimum between three exponents.
Inspired by the fact that spherical-shell codebooks are good for the single-user Gaussian channel, Gallager derived a lower (achievable) bound $\ErG(R_1,R_2,A_1,A_2)$ using such codebooks for the MAC channel.
In the sequel, we use an upper bound on the exponent $\ErG(R_1,R_2,A_1,A_2)$ derived in~\cite{Gallager:1985}, which is equal to the exponent which corresponds to the third error event, $\ErGc(R_1+R_2,A_1,A_2)$ (which is analogous to the $\ErSWc(R_1+R_2)$ of the discrete case):
\begin{align}
\nonumber
\ErUG(R,A_1,A_2) &\mDefine \ErGc(R,A_1,A_2)
\\\nonumber
&= (1+\rho) \log \frac{e\sqrt{\theta_1 \theta_2}}{1+\rho}
- \frac{\theta_1 + \theta_2}{2}
\\&\phantom{=} + \frac{\rho}{2} \log \left( 1 + \frac{A_1}{\theta_1} + \frac{A_2}{\theta_2} \right) - \rho R,
\label{eq:spherical_shell:UbOnEr3}
\end{align}
where $R \mDefine R_1+R_2$, and the expression is optimized over $\rho \in [0,1]$ and over $r_i$ for:
\begin{align}
\theta_i \mDefine (1+\rho)(1 - 2 r_i P_i), \mspcd \theta_i \in [0,1+\rho].
\end{align}

The associated single-user channel of~\eqref{eq:awgn_mac:channel} is defined as in the discrete case in Section~\ref{sec:discrete:channel_model:additive_MAC}:
\begin{align}
\label{eq:Gaussian:associated_single_user}
Y = X + Z,
\end{align}
where $X$ represents $X_1 + X_2$.
However, in contrast to the discrete case, here there are power constraints on $X_1$ and $X_2$. Since there is no cooperation between the transmitters, the best power we can hope to achieve in the associated single-user channel is the sum of powers. Therefore the power constraint of input of the associated single-user channel is given by $P \mDefine P_1+P_2$.
%
%
Thus, the error-exponent of the associated single-user channel
\begin{align*}
\ErrexpSU(R_1+R_2,A_1+A_2)
\end{align*}
serves as a benchmark for the error exponent of the MAC channel.
%
The exponent $\ErG(R_1,R_2,A_1,A_2)$ is strictly smaller than the single-user exponent. Moreover, it is strictly smaller than the \emph{random coding} error exponent $\ErSU(R_1+R_2,A_1+A_2)$~\cite{Gallager:1985}.

\subsection{Lattices Preliminaries}
\label{sec:Gaussian:lattice_preliminaries}

This section presents mathematical background that is required for the code construction in the continuous alphabets case and its error-probability analysis. For a more thorough treatment of lattices, the reader may refer to~\cite{ErezLitsynZamir:2005} and the references therein.

A lattice $\Lambda$ is a discrete subgroup of the Euclidean space $\mRR^n$ with the ordinary vector addition operation.
A Lattice may be specified in terms of a generating matrix. Thus, an $n \times n$ real-valued matrix $G$ defines a lattice by
\begin{align}
\Lambda = \left\{ \lambda = G \mvec{x}: \mvec{x} \in \mZZ^n \right\}.
\end{align}

A coset of $\Lambda$ is any translate of it, i.e., $\mvec{x}+\Lambda$, where $\mvec{x} \in \mRR^n$.
Any set $\Omega$ of coset representatives is called a \emph{fundamental region}.
Therefore, every $\mvec{x} \in \mRR^n$ can be uniquely expressed as $\mvec{x} = \lambda + \mvec{r}$, where $\lambda \in \Lambda, \mvec{r} \in \Omega$.
The Voronoi region of $\Lambda$ with respect to the origin, denoted by $\mathcal{V}$, is a set of minimum Euclidean norm coset representatives of $\Lambda$, where ties are broken arbitrarily.
The nearest-neighbor quantizer of a point $\mvec{x} \in \mRR^n$ is defined by:
\begin{align}
\label{eq:quantizer}
\begin{array}{cc} 
Q(\mvec{x}) = \lambda, & \textrm{if } \mvec{x} \in \lambda + \mathcal{V} \textrm{ and } \lambda \in \Lambda.
\end{array}
\end{align}

\subsubsection{Good Lattices}
\label{sec:good_lattices}
We define two notions of ``good'' lattices.
Let $r_\Lambda^\textrm{cov}$ denote the covering radius of $\Lambda$, i.e., the radius of the smallest ball containing the Voronoi region $\mathcal{V}$.
Let $r_\Lambda^\textrm{effec}$ denote the effective radius of the Voronoi region, i.e., the radius of a sphere having the same volume as $\mathcal{V}$.
We say that a sequence of lattices $\Lambda^{(n)} \in \mRR^n, \mspcb n=1,2,\ldots,$ is \emph{good for covering} if $\liminf_{n \mGoesTo \infty} r_{\Lambda^{(n)}}^\textrm{cov}/r_{\Lambda^{(n)}}^\textrm{effec} = 1$.
Choosing a sequence of lattices which are good for covering with $r_{\Lambda^{(n)}}^\textrm{cov} = \sqrt{nP}$ leads to $r_{\Lambda^{(n)}}^\textrm{effec} = \sqrt{n(P - \delta_n)}$, where $\lim_{n\mGoesTo\infty} \delta_n = 0^+$, and in addition  that $\nicefrac{1}{n}\sum_{i=1}^n x_i^2 \leq P$ for any point in $\mvec{x} \in \mathcal{V}$.
Rogers~\cite{Rogers59} established the existence of sequence of lattices which are good for covering (which we denote as \emph{Rogers-good}).

A sequence of lattices $\Lambda^{(n)} \in \mRR^n \mspcb, n=1,2,\ldots,$ is said to be \emph{Poltyrev-good} if for an $n$-dimensional vector $\mvec{Z}$ with i.i.d. Gaussian entries with zero mean and power $N$:
\begin{align}
\label{eq:Poltyrev_good_condition}
\Pr\left(  \mvec{Z} \notin \mathcal{V}^{(n)} \right) < e^{-n\left[E_P\left(\rho_{\Lambda^{(n)}}^2\right) - o_n(1)\right]},
\end{align}
where $\mathcal{V}^{(n)}$ is the Voronoi region of $\Lambda^{(n)}$, 
$\rho_{\Lambda^{(n)}}$ is the \emph{Voronoi-to-noise} ratio:
\begin{align}
\label{eq:voronoi_to_noise_ratio:n}
\rho_{\Lambda^{(n)}} \mDefine \frac{r_{\Lambda^{(n)}}^\textrm{effec}}{\sqrt{n N}}
= \frac{[\vol{\mathcal{V}^{(n)}}]^{1/n}}{\sqrt{2 \pi e N}} + o_n(1),
\end{align}
with $o_n(1) \mGoesTo 0$ as $n \mGoesTo \infty$, and $E_P(\mu)$ is the Poltyrev exponent defined in~\eqref{eq:poltyrev_exponent}.
The Poltyrev exponent is the best known achievable error exponent in the \emph{unrestricted} additive noise white Gaussian (AWGN) setting~\cite{Poltyrev:94:AWGN}, where the rate is measured per unit volume.

\subsubsection{Nested Lattice Codes}
\label{sec:nested_lattice_codes}
Here we recall nested-lattice codes for a single-user AWGN channel. The construction of codes for the Gaussian MAC channel is described in Section~\ref{sec:Gaussian:distributive_nesting_code}, and builds on nested lattice codes.

We say that a coarse lattice $\Lambda_0$ is nested in a fine lattice $\Lambda_1$ if $\Lambda_0 \subseteq \Lambda_1$, i.e., $\Lambda_0$ is a sublattice of $\Lambda_1$.
We denote their Voronoi regions with respect to the origin by $\mathcal{V}_0$ and $\mathcal{V}_1$ respectively, and the volumes of the Voronoi regions by $V_0$ and $V_1$ respectively.

A translate of $\Lambda_0$ by a point of $\Lambda_1$ is called a coset of $\Lambda_0$ relative to $\Lambda_1$.
Any set of coset representatives of $\Lambda_0$ relative to $\Lambda_1$ is called a \emph{nested-lattice code} $(\Lambda_1,\Lambda_0)$.
However, in power constrained codebooks, we select a set of coset representatives with minimal power, which may be defined by taking the intersection of a fine lattice $\Lambda_1$ with the Voronoi region (w.r.t. the origin) of a sublattice $\Lambda_0$, i.e., $\mset{C} = \Lambda_1 \cap \mathcal{V}_0$. This special case of nested lattice codes is denoted by $(\Lambda_1,\Lambda_0)_\textrm{Vor}$.

Thus, the number of codewords is equal to $V_0/V_1$. We call $(V_0/V_1)^{1/n}$ the nesting ratio of the lattices. The code rate is thus equal to the normalized per dimension logarithm of the nesting ratio: $R = \nicefrac{1}{n} \log (V_0/V_1)$.


The existence of simultaneously Rogers-good and Poltyrev-good nested lattices sequence $\Lambda_0 \subseteq \Lambda_1 \subseteq \cdots \subseteq \Lambda_{L-1}$ for any nesting level $L$ and any choice of nesting ratios is shown in~\cite{KrithivasanPradhan:2007:online}.

%
\section{Coding for Gaussian MAC Channels: Distributed Nesting}
%
\label{sec:Gaussian:distributive_nesting_code}
In this section we describe a code construction for the MAC channel, which is based on distributed structure. It consists of two codebooks, where both are subsets of the same lattice, thus their Minkowski sum forms a subset of the fine lattice. It has the property that every pair of codewords results in a different point of the fine lattice. In addition, the fine lattice is a good lattice for the associated single-user channel.
Hence, inherently from the code, the decoding is done jointly for the two users.

The resulting exponent from this coding scheme is given by the following.  
\begin{theorem} \label{thm:Gaussian} For a Gaussian MAC channel with SNRs $(A_1,A_2)$ and rates $(R_1,R_2)$,
\begin{align}
\label{eq:Gaussian:MAC:structured_code:error_exponent}
E^\textrm{MAC} (R_1,R_2,A_1,A_2) & \geq \DistributedNestingErrexp (R_1,R_2,A_1,A_2) \nonumber \\
& \mDefine E_p(\min(\mu_1,\mu_2)), \end{align} where the Poltyrev exponent $E_P(\cdot)$ was defined in \eqref{eq:poltyrev_exponent}, and
\begin{align}
\label{eq:mu1}
\mu_1=\mu_1(R_1,R_2,A_1,A_2) &= A_1 \exp[-2(R_1+R_2)], \\ \label{eq:mu2}
\mu_2=\mu_2(R_1,R_2,A_1,A_2) &= A_2 \exp[-2R_2].
\end{align}
\end{theorem}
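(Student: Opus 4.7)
The plan is to adapt the nested-linear-codes idea of Section~\ref{sec:discrete:coding_technique} to the Gaussian setting via a three-level nested lattice chain. I would build two codebooks that both sit inside a common Poltyrev-good fine lattice $\Lambda_f$, arranged so that the Minkowski-sum map $(c_1,c_2)\mapsto c_1+c_2$ is injective into $\Lambda_f$. The decoder then quantizes the received signal to $\Lambda_f$, achieving joint decoding inherently from the code structure, with the error probability governed by the Voronoi-to-noise ratio of $\Lambda_f$.

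For the construction, I would invoke the existence (from~\cite{KrithivasanPradhan:2007:online}) of a nested chain $\Lambda_s \subseteq \Lambda_c \subseteq \Lambda_f$ in $\mathbb{R}^n$ with $\Lambda_s,\Lambda_c$ Rogers-good and $\Lambda_f$ Poltyrev-good, for any prescribed nesting ratios. Denote their per-dimension volumes by $v_s,v_c,v_f$. I would take $v_s\to\sqrt{2\pi e P_1}$ so that $\mathcal{V}_s$ fits inside the $P_1$-power ball, $v_c = \min\{\sqrt{2\pi e P_2},\, v_s \exp(-R_1)\}$ so that $\mathcal{V}_c$ fits inside the $P_2$-power ball and $\Lambda_c\cap\mathcal{V}_s$ contains at least $\exp(nR_1)$ points, and $v_f = v_c \exp(-R_2)$, so that $\Lambda_f\cap\mathcal{V}_c$ has $\exp(nR_2)$ points. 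Then set $\mathcal{C}_2 = \Lambda_f\cap\mathcal{V}_c$ and let $\mathcal{C}_1\subseteq\Lambda_c\cap\mathcal{V}_s$ be any subset of size $\exp(nR_1)$. Each user transmits $X_i = c_i$ directly; the per-codeword power constraints are met deterministically since $c_1\in\mathcal{V}_s$ and $c_2\in\mathcal{V}_c$ are bounded by the Rogers-good shaping.

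Two observations drive the decoding analysis: (i) $c_1+c_2\in\Lambda_f$ since both summands belong to $\Lambda_f$, and (ii) distinct input pairs yield distinct sums, because $c_1+c_2 = c_1'+c_2'$ implies $c_2-c_2' = c_1'-c_1\in\Lambda_c$, while $c_2,c_2'\in\mathcal{V}_c$ are coset representatives of $\Lambda_c$ in $\Lambda_f$, forcing $c_2=c_2'$ and hence $c_1=c_1'$. Applying a nearest-neighbor quantizer in $\Lambda_f$ to $Y=(c_1+c_2)+Z$ therefore fails only on the event $Z\notin\mathcal{V}_f$, whose probability is at most $\exp[-n(E_P(\rho_{\Lambda_f}^2)-o_n(1))]$ by Poltyrev-goodness~\eqref{eq:Poltyrev_good_condition}. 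A direct computation gives
\begin{align*}
\rho_{\Lambda_f}^2 = \frac{v_f^2}{2\pi e N} = \min\!\bigl(A_1\exp[-2(R_1+R_2)],\,A_2\exp[-2R_2]\bigr) = \min(\mu_1,\mu_2),
\end{align*}
yielding the claim.

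The main obstacle is the simultaneous construction of a three-level nested chain with the required nesting ratios and with both shaping (Rogers) and channel-coding (Poltyrev) goodness; this is delegated to the existence result of~\cite{KrithivasanPradhan:2007:online}. The injectivity of the sum map, the per-codeword power bound, and the final Poltyrev-exponent calculation then follow essentially from the definitions.
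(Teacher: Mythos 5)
Your proposal is correct and follows essentially the same route as the paper: a three-level Rogers-/Poltyrev-good nested lattice chain with the coarse-to-middle ratio capped by $\min\{P_2, P_1\exp(-2R_1)\}$ (the paper's $\tilde{P}_2$), Voronoi-shaped codebooks $\mset{C}_1=\Lambda_1\cap\mathcal{V}_0$ and $\mset{C}_2=\Lambda_2\cap\mathcal{V}_1$ whose Minkowski sum injects into the fine lattice, and a lattice (quantizer) decoder whose error probability is controlled by the Poltyrev exponent at the Voronoi-to-noise ratio $\tilde{N}/N=\min(\mu_1,\mu_2)$. The only differences are presentational — you specify volumes where the paper specifies covering radii, and you spell out the injectivity of the sum map, which the paper only asserts.
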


Unlike the discrete modulo-additive case, this exponent is strictly smaller than the exponent of the associated single-user channel. Furthermore, it is positive only inside the region
\begin{align}
\label{eq:rate_region}
\Rstruct \mDefine \left\{ (R_1,R_2) : \mspcb
R_1+R_2 \leq \frac{1}{2} \log A_1; \mspcb
R_2 \leq \frac{1}{2} \log A_2
\right\},
\end{align}
which is strictly smaller than the MAC capacity region (see Figure~\ref{fig:Rstruct_region}). Thus, the code presented here is sub-optimal in terms of capacity. Still, for certain rate-pairs it improves on the best previously known error exponent for the Gaussian MAC channel as derived in~\cite{Gallager:1985}.

\begin{figure*}[htb]
\centering
\psfrag{&D1}[rc]{{\footnotesize $\frac{1}{2} \log A_1$}}
\psfrag{&C1}[ll]{             {\footnotesize $\frac{1}{2} \log (1+A_1)$}}
\psfrag{&D12}[cc]{{\footnotesize $\frac{1}{2} \log \frac{A_1}{A_2}$}}
\psfrag{&D2}[cc]{{\footnotesize $\frac{1}{2} \log A_2$}}
\psfrag{&C2}[cc]{{\footnotesize $\frac{1}{2} \log (1+A_2)$}}
\psfrag{&mu1}[cc]{{\small $E_P(\mu_1)$}}
\psfrag{&mu2}[cc]{{\small $E_P(\mu_2)$}}
\includegraphics[scale=0.5]{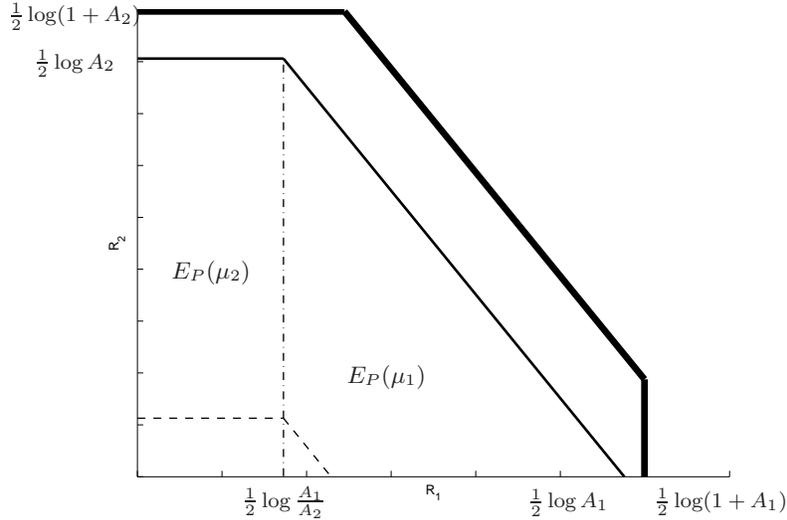}
\caption{The boundary of the capacity region of the Gaussian MAC is given the thick solid line. The boundary of the achievable region of the structured code, $\Rstruct$, is given by the thin solid line. The dash-dot line is the point where $\mu_1=\mu_2$, which is equivalent to $R_1 = \nicefrac{1}{2} \log \nicefrac{P_1}{P_2}$. This line separates between the region which $\mu_1$ is dominant and the region which $\mu_2$ is dominant in the error exponent of the structured code~\eqref{eq:Gaussian:MAC:structured_code:error_exponent}. The dash line denotes the boundary of the expurgation region of this exponent, which is $\min\{\mu_1,\mu_2\} \geq 4$.}
\label{fig:Rstruct_region}
\end{figure*}

In Section~\ref{sec:modulo_lattice_MAC_channel} we present an extension of the one-dimensional continuous modulo additive channel~\eqref{eq:continuous:modulo_additive_channel:one_dimentional} from Section~\ref{sec:discrete:coding_technique} to higher dimensions, in which the exponent of the associated single-user channel is indeed achieved. Then in Section~\ref{sec:one_dimensional_motivation} we discuss the difference between this problem and the power-constrained Gaussian channel, in terms of coding and performance. Section~\ref{sec:codebook_construction} describes the coding scheme, while Section~\ref{sec:error_analysis} evaluates the resulting error exponent, thus proving Theorem~\ref{thm:Gaussian}. Finally, in Section~\ref{sec:comparison} we compare the performance to previously-known bounds. 

\subsection{Distributed Structure for the Modulo Lattice Additive MAC Channel with Continuous Alphabets}
\label{sec:modulo_lattice_MAC_channel}

Extending the one-dimensional continuous modulo additive channel~\eqref{eq:continuous:modulo_additive_channel:one_dimentional} to higher dimensions, we get the following channel:
\begin{align}
\label{eq:modulo_lattice_MAC_channel}
\mvec{Y} = (\mvec{X}_1 + \mvec{X}_2 + \mvec{Z}) \bmod \Lambda_0,
\end{align}
where $\Lambda$ is a lattice over $\mRR^n$ and $\mvec{Z}$ is a Gaussian i.i.d. vector with zero mean and power $N$.
The best known error exponent of the associated single-user channel
\begin{align}
\mvec{Y} = (\mvec{X} + \mvec{Z}) \bmod \Lambda_0
\end{align}
is achieved by a nested lattice codebook $(\Lambda_1, \Lambda_0)$, with some lattice $\Lambda_0 \subseteq \Lambda_1$~(see~\cite{ErezZamir:2004}).
This exponent is also achievable in the MAC channel by using a nested pair lattices (in which~$\Lambda_0$ is nested), since they form an indistinguishable codebook from the one of the associated single user.
Hence, extending the coding technique of Section~\ref{sec:discrete:coding_technique} to the continuous case also achieves the best known error exponent (the Poltyrev exponent~\eqref{eq:poltyrev_exponent} with $\mu=\rho_\Lambda^2$, where the Voronoi-to-noise $\rho_\Lambda=\nicefrac{r_{\Lambda}^\textrm{effec}}{\sqrt{n N}}$ as defined in~\eqref{eq:voronoi_to_noise_ratio:n}) of the associated of the single-user channel, and is optimal above its critical rate.

\subsection{The Effect of the Power Constraint}
\label{sec:one_dimensional_motivation}


Comparing the Gaussian MAC channel~\eqref{eq:awgn_mac:channel} to the modulo-lattice MAC~\eqref{eq:modulo_lattice_MAC_channel}, there are two differences: the first is that the Gaussian channel does not perform a modulo lattice operation and the second is that the powers of the channel inputs are constrained.
In order to demonstrate the effect of the power constraint, in this section we consider the case of scalar (one-dimensional lattice) codebooks, i.e., uncoded transmission.


For the associated single-user channel~\eqref{eq:Gaussian:associated_single_user} of the Gaussian MAC channel, at high SNR, the optimal codebook for uncoded transmission approaches a pulse-amplitude modulation (PAM) constellation, which is a set of equidistant points, symmetrically around zero, that satisfy the power constraint. This constellation is a one-dimensional nested-lattice codebook $(\mZZ,L\mZZ)_\textrm{Vor}$, where $L$ is the constellation size and it is odd. Therefore, $\mset{C} = \left\{ 0, 1, \ldots, L-1 \right\} - (L-1)/2 = \left\{ 0, \pm 1, \ldots, \pm(L-1)/2 \right\}$.
%
%
This demonstrates one effect of the power constraint: while in the associated single-user channel of the modulo-lattice MAC channel~\eqref{eq:modulo_lattice_MAC_channel} the coset representatives could be selected arbitrarily, in the power-constrained case, the coset representatives must be selected such that the the power constraint is satisfied. Thus we select minimal Euclidean norm coset representatives. This effect carries over to the multidimensional construction described in Section~\ref{sec:codebook_construction}.


In order to build a good scalar codebook pair for the Gaussian MAC channel, we use the observation that under certain conditions, the Minkowski sum of two PAM signals is also a PAM signal. Therefore a good (scalar) code for the associated single-user channel is constructed in a distributed manner.
Specifically, consider the triplet of nested lattices $L_0 \mZZ \subseteq L_1 \mZZ \subseteq \mZZ$, where the ratio $L_0/L_1$ is odd. Then the codebook of the first user is given by the nested-lattice code $(L_1\mZZ,L_0\mZZ)_\textrm{Vor}$, i.e. $\mset{C}_1 = \left\{ 0, L_1, 2L_1, \ldots, L_0-L_1 \right\}-(L_0-L_1)/2$. The codebook of the second user is given by the nested-lattice code $(\mZZ,L_1\mZZ)_\textrm{Vor}$, i.e. $\mset{C}_2 = \left\{ 0, 1, \ldots, L_1-1 \right\} - (L_1-1)/2$. Thus the Minkowski sum of the codebooks is equal to the nested-lattice code $(L_2\mZZ,\mZZ)_\textrm{Vor}$, i.e., $\mset{C} = \mset{C}_1 + \mset{C}_2 = \left\{ 0, \pm 1, \ldots, \pm(L_0-1)/2 \right\}$.
The transformation $c = c_1+c_2$ where $c_1 \in \mset{C}_1, c_2 \in \mset{C}_2$ is injective, i.e., the codeword pair can be resolved from their sum without ambiguity (see Figure~\ref{fig:one_dim_example}).
\begin{figure}[tb]
\centering
\subfigure[The codebook of the second user (with the weaker power) $\mset{C}_2$.]{
\includegraphics[scale=0.6, trim=10mm 0mm 10mm 110mm, clip=true]{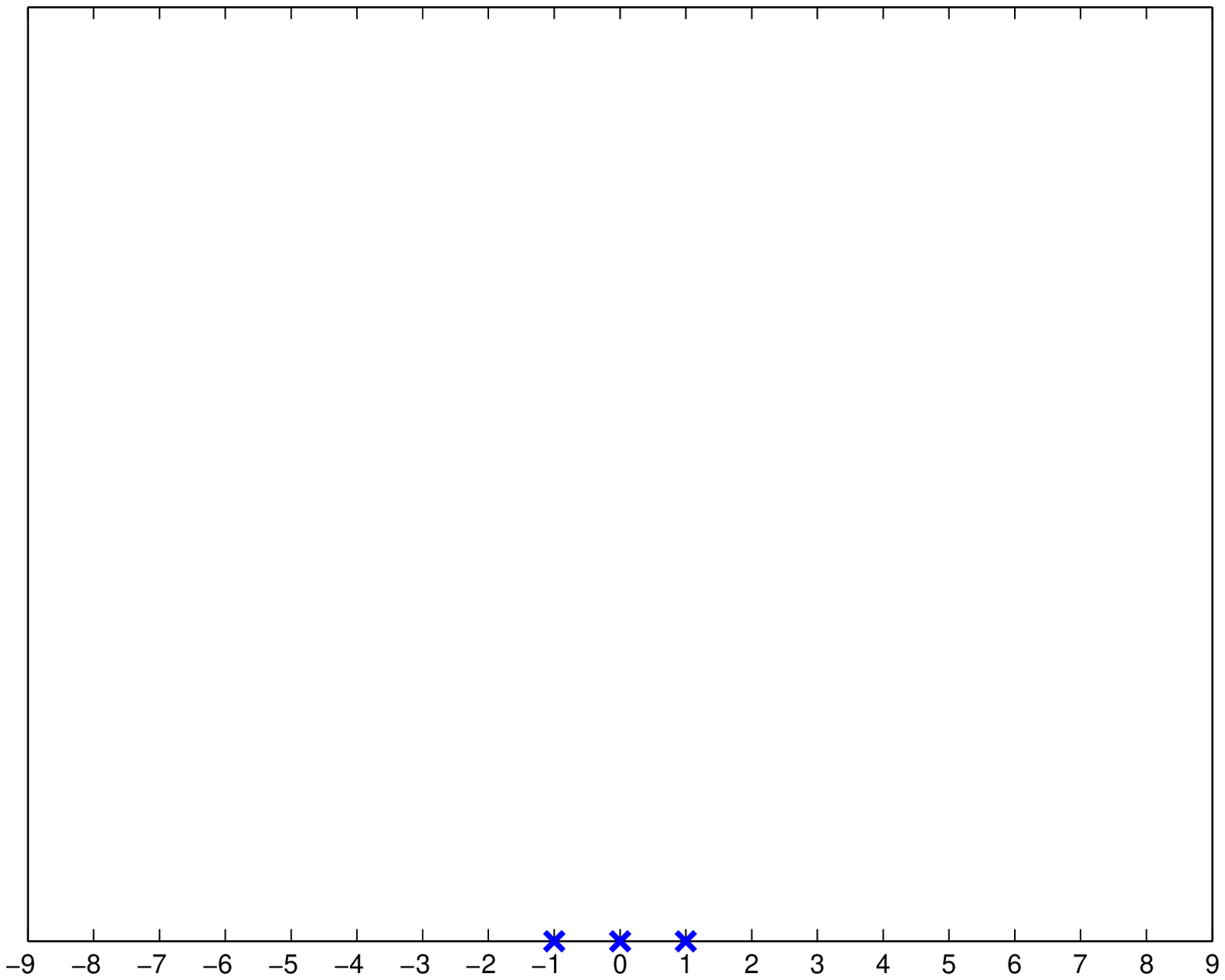}
}
\subfigure[The codebook of the first user (with the stronger power) $\mset{C}_1$.]{
\includegraphics[scale=0.6, trim=10mm 0mm 10mm 110mm, clip=true]{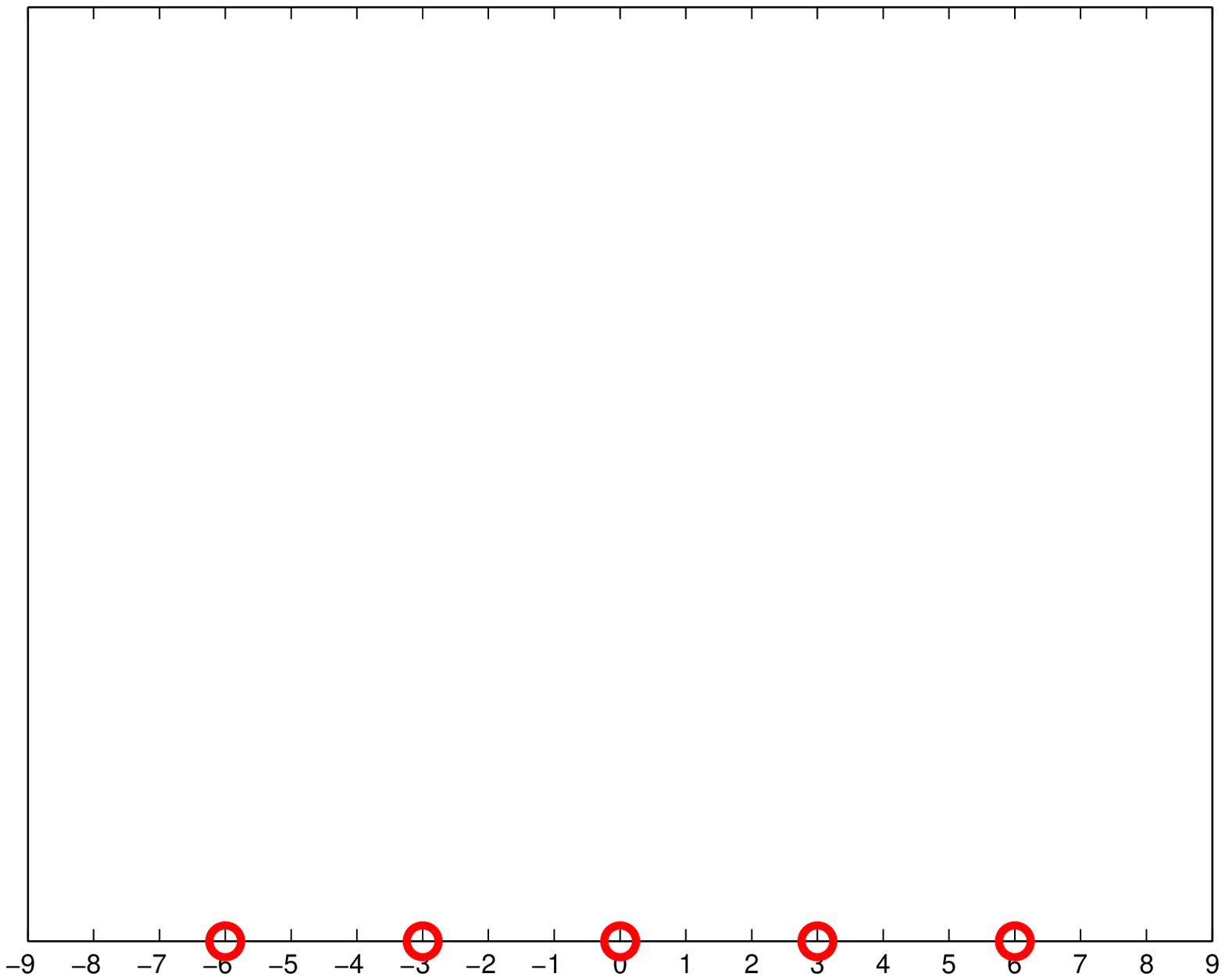}
}
\subfigure[The joint codebook as seen at the decoder, i.e., the Minkowski sum $\mset{C}_1+\mset{C}_2$.]{
\psfrag{&-6+C2}[ll]{{\scriptsize $-6+C_2$}}
\psfrag{&-3+C2}[ll]{{\scriptsize $-3+C_2$}}
\psfrag{&+0+C2}[ll]{{\scriptsize $\phantom{+}0+C_2$}}
\psfrag{&+3+C2}[ll]{{\scriptsize $+3+C_2$}}
\psfrag{&+6+C2}[ll]{{\scriptsize $+6+C_2$}}
\includegraphics[scale=0.6, trim=10mm 0mm 10mm 82mm, clip=true]{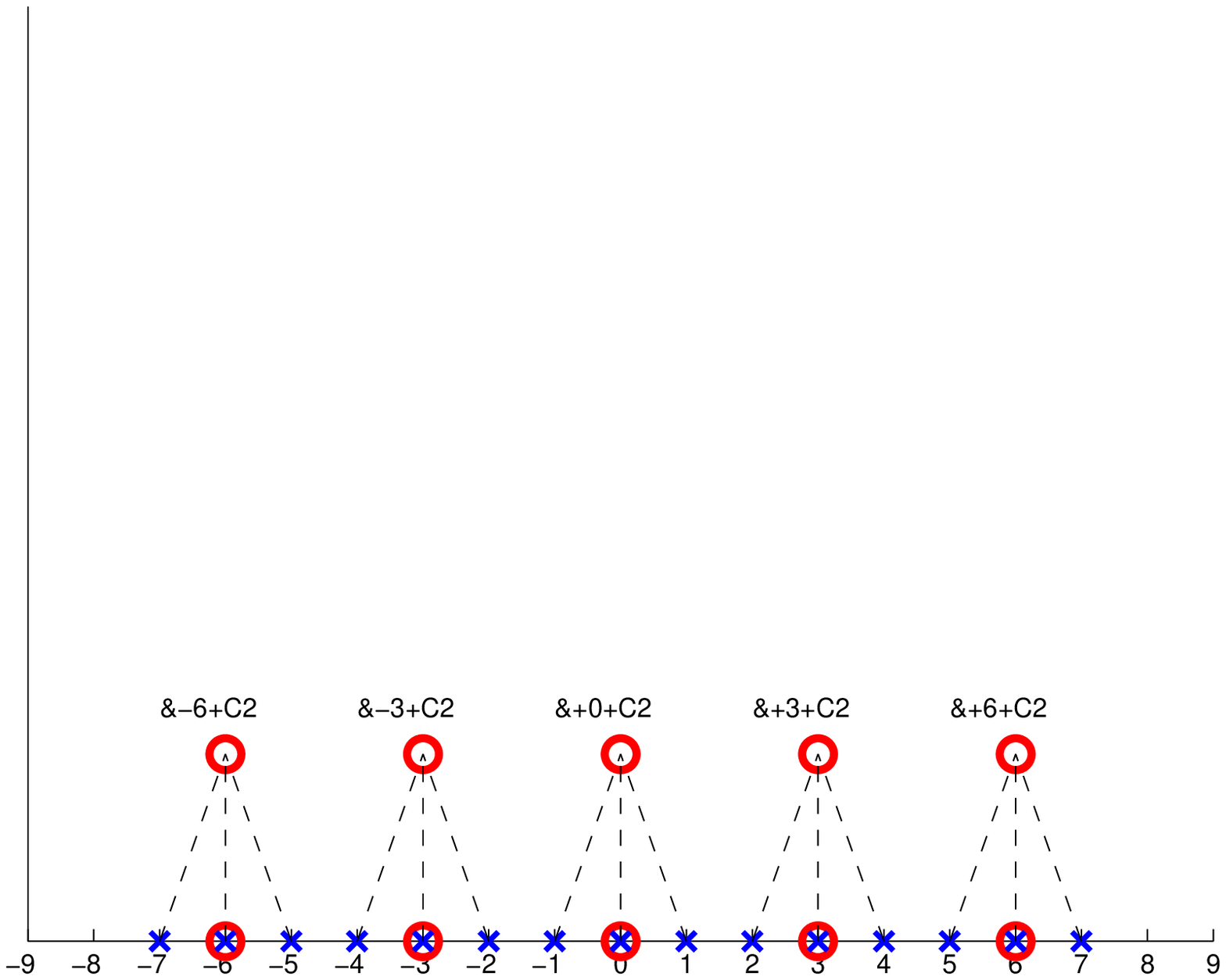}
}
\vspace{-3mm}
\caption{One-dimensional example: the circles are points of the codebook of the first user. The dots are the superposition of the the two codebooks ($L_1=3, L_0=15$).}
\vspace{-6mm}
\label{fig:one_dim_example}
\end{figure}
Thus, from the point of view of the decoder, the sum-constellation may have resulted from transmission of a PAM codebook of a single user. Indeed, it is a PAM constellation with $\cardinality{\mset{C}_1} \cdot \cardinality{\mset{C}_2}$ points, where the distance between the points is equal to that of $\mset{C}_2$. The error probability can be calculated directly from here. 
%
%
This demonstrates another effect of the power constraint that did not appear in the modulo-lattice MAC channel:
The Minkowski addition of the codebooks can be seen as \emph{tiling} the codebook with the weaker power, $\mset{C}_2$, around the points of the codebook with the stronger power, $\mset{C}_1$, such that the resulting points are equidistant. For such a construction, the powers of the codebooks must be \emph{non-equal}, and in particular, the power of the weaker user should be such that its PAM constellation fits between two points of the PAM constellation of the stronger user (see Figure~\ref{fig:one_dim_example}). 


Note that if the constellations were designed independently, e.g. by using PAM constellations with distances that do not result in a PAM constellation, some points would have been closer, resulting in higher error probability. 
In the high-dimensional case we strive to preserve this distance uniformity behavior. We will thus require that the decoding (Voronoi) region of the corresponding single-user codebook, as seen by the decoder, remains the same as the Voronoi region of $\mset{C}_2$.

Extending this example to higher dimensions will preserve this distance property, but will result in a shaping loss. Therefore, in addition to the distance properties between the codebook points, we want the corresponding codebook of the associated single-channel to have ``good'' shaping. Similarly to the coding technique which is described here, this will be achieved by using a triplet of ``good'' nested lattices.
In the one-dimensional case of PAM constellation, the tiling was perfect, in the sense that with an appropriate power-pair and an appropriate rate-pair, the resulting codebook as seen at the decoder is a PAM constellation with the sum powers and sum rates.
However, the coding technique which is proposed in the sequel for the multidimensional case does not accomplish a perfect shaping region as the one of the associated single-user channel, and therefore does not achieve the single-user error exponent. Furthermore, there is a capacity loss, which explains some of the loss in Theorem~\ref{thm:Gaussian} in respect with the capacity region of the MAC channel (the other loss which disappears at high SNR, is due to using a suboptimal decoder).

\subsection{Codebooks Construction}
\label{sec:codebook_construction}


Consider the Gaussian MAC channel~\eqref{eq:awgn_mac:channel}
with SNRs $A_1,A_2$ and rates $R_1,R_2$. Recall that without loss of generality we assume that $A_1 \geq A_2$ (equivalently, $P_1 \geq P_2$). We only consider rate-pairs inside $\Rstruct$ \eqref{eq:rate_region}. Furthermore we assume that $A_2\geq 1$ ($P_2 \geq N$), otherwise this rate-region is empty.

In our construction, the actual transmission power of the second user may be lower than the constraint $P_2$, in order to increase the exponent when larger rates for the first user are sought. This actual power is given by: 
\begin{align}
\tilde{P}_2 \mDefine \min\left\{ P_2, P_1 \exp(-2 R_1) \right\}.
\end{align}
In addition, denote
\begin{align}
\tilde{N} \mDefine \tilde{P}_2 \exp(-2 R_2).
\end{align}
Notice that these choices ensure that $N \leq \tilde{N} \leq \tilde{P}_2 \leq P_2$.
%
%
The codebook generation uses a triplet of nested lattices: $\Lambda_0^{(n)} \subseteq \Lambda_1^{(n)} \subseteq \Lambda_2^{(n)}$, where the covering radii are given by:
\begin{align*}
r^\textrm{cov}_{\mathcal{V}_0^{(n)}} = \sqrt{n P_1}, \mspcd
r^\textrm{cov}_{\mathcal{V}_1^{(n)}} = \sqrt{n \tilde{P}_2}, \mspcd
r^\textrm{cov}_{\mathcal{V}_2^{(n)}} = \sqrt{n \tilde{N}},
\end{align*}
where $\mathcal{V}_i^{(n)}$ is the Voronoi region of $\Lambda_i^{(n)}$ w.r.t. the origin.
Each lattice sequence is both Rogers-good and Poltyrev-good.\footnote{Such a chain of nested lattices exists by~\cite{KrithivasanPradhan:2007:online}. Less restrictive constraints on the goodness of the lattices may suffice.}
Denote the volume of $\mathcal{V}_i^{(n)}$ by $V_i^{(n)}$.
Since the lattices are Rogers-good, we have:
\begin{align*}
&V_0^{(n)} = [2 \pi e (P_1 - \delta_{1,n})]^{n/2},
\\&V_1^{(n)} = [2 \pi e (\tilde{P}_2 - \delta_{2,n})]^{n/2},
\\&V_2^{(n)} = [2 \pi e (\tilde{N} - \delta_{3,n})]^{n/2},
\end{align*}
where $\delta_{i,n} \mGoesTo 0^+$ as $n \mGoesTo \infty$.
See Figure~\ref{fig:nested} for an illustration of the nested lattices.

\begin{figure}[htb]
\centering
\includegraphics[trim = 20mm 15mm 20mm 10mm, clip, scale=0.6]{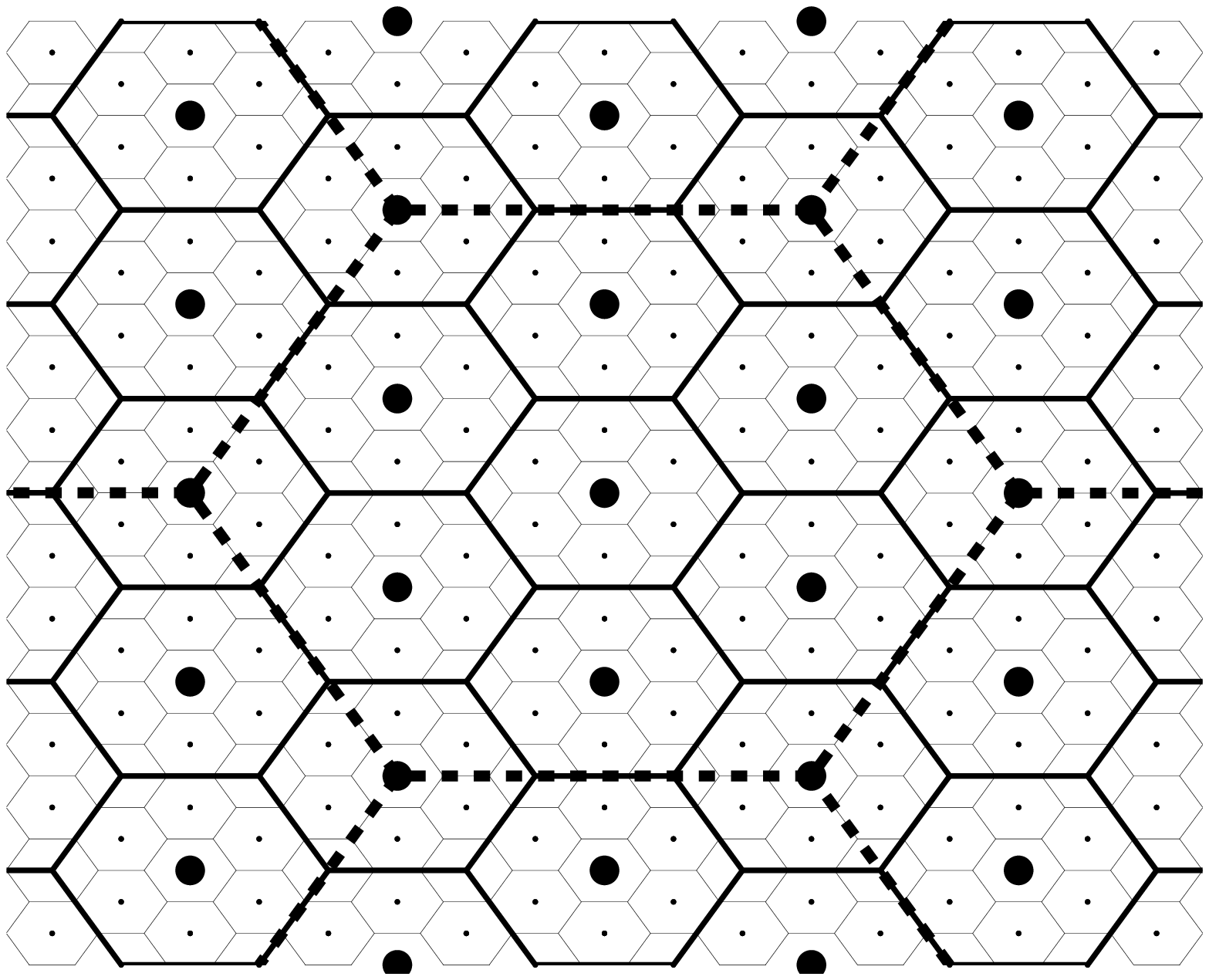}
\caption{Nested lattice with ratio 3,3: The thick-dashed line is the Voronoi partition of $\Lambda_0$, where the region in the center is $\mathcal{V}_0$, which is the shaping region of the first user.
The thick points are the points of $\Lambda_1$, and the thick-solid line is its Voronoi partition. The thick-solid line region in the center is $\mathcal{V}_1$, which is the shaping region of the second user.
The thin dots (together with the thick ones) are the points of $\Lambda_2$, and the thin-solid line is its Voronoi partition.}
\label{fig:nested}
\end{figure}


The codebook of the second user is given by the nested lattice code $( \Lambda_2^{(n)}, \Lambda_1^{(n)} )_\textrm{Vor}$, i.e.: $\mset{C}_2^{(n)} =  \Lambda_2^{(n)} \cap \mathcal{V}_1^{(n)}$.
The rate of the second user is given by the nesting ratio:
\begin{align}
\label{eq:R2}
R_2^{(n)} = \frac{1}{n} \log \frac{V_1}{V_2}
\mGoesToAs{n \mGoesTo \infty} \frac{1}{2} \log \frac{\tilde{P}_2}{\tilde{N}}
= R_2.
\end{align}
%
%
The codebook of the first user is given by the nested lattice code $( \Lambda_1^{(n)}, \Lambda_0^{(n)} )_\textrm{Vor}$, i.e.: $\mset{C}_1^{(n)} = \Lambda_1^{(n)} \cap \mathcal{V}_0^{(n)}$.
The rate of the first user results from the nesting ratio:
\begin{align}
\label{eq:R1}
R_1^{(n)} = \frac{1}{n} \log \frac{V_0}{V_1}
\mGoesToAs{n \mGoesTo \infty} \frac{1}{2} \log \frac{P_1}{\tilde{P}_2}
= R_1.
\end{align}
Since  $\Lambda_1^{(n)}$ and $\Lambda_2^{(n)}$ are Rogers-good lattices, as explained in Section~\ref{sec:good_lattices}, it follows that $\mset{C}_1^{(n)}$ and $\mset{C}_2^{(n)}$ satisfy the power constraints $P_1$ and $P_2$ respectively.
Notice that the Minkowski sum of the two codebooks is a subset of the fine lattice $\Lambda_2^{(n)}$.

\subsection{Error Probability Analysis}
\label{sec:error_analysis}

The Minkowski sum of the codebook pair, which is a subset of the fine lattice $\Lambda_2^{(n)}$, can be interpreted as the corresponding codebook of the associated single-user channel. Thus, we can use a \emph{lattice decoder} for joint decoding of the message pair. A lattice decoder is simply a lattice quantizer~\eqref{eq:quantizer}.
An achievable\footnote{We analyze the error probability of the nested lattice codebooks that are described in~\cite{KrithivasanPradhan:2007:online}. This ensemble builds on \emph{Construction~A}~\cite{Leoliger:1997}.} error exponent using lattice decoder is given by the Poltyrev exponent~\eqref{eq:poltyrev_exponent} with~$\mu$ given according to the Voronoi-to-noise ratio given by~\eqref{eq:voronoi_to_noise_ratio:n} (see~\cite{ErezLitsynZamir:2005}):
\begin{align*}
\mu^\textrm{struct} &= \mu^\textrm{struct}(R_1,R_2,A_1,A_2)
\\&\mDefine \lim_{n \mGoesTo \infty} \rho_{\Lambda^{(n)}}^2
\\&= \lim_{n \mGoesTo \infty} \frac{\left(V_2^{(n)}\right)^{2/n}}{2 \pi e N}
\\&= \frac{\tilde{N}}{N}
\\&= \frac{1}{N} \cdot \min\left\{ P_2, P_1 \exp(-2 R_1) \right\} \exp( -2 R_2 ) 
\\&= \min\left\{ A_2\exp( -2 R_2 ), A_1 \exp[-2 (R_1+R_2)] \right\}.
\\&= \min\left\{ \mu_2, \mu_1 \right\},
\end{align*}
where $\mu_1$ and $\mu_2$ are defined in \eqref{eq:mu1}, \eqref{eq:mu2}.
This proves Theorem~\ref{thm:Gaussian} for all rates in $\Rstruct$.
Since outside $\Rstruct$ this VNR satisfies $\mu^\textrm{struct} \leq 1$, it also proves the theorem for all rates. 

Note that the boundary where $\mu_1 = \mu_2$ corresponds for the case where $R_1 = \nicefrac{1}{2} \log \nicefrac{A_1}{A_2}$, and for smaller rates for the first user we use the maximal allowed power for transmission of the second user. In this case, $\mu_2$ is dominant.

\subsection{Performance Comparison}
\label{sec:comparison}

We now show that the suboptimal encoding-decoding scheme above outperforms the spherical-shells exponent for some rate pairs.

For given $A_1,A_2$ and a fixed $R_1=\nicefrac{1}{2} \log \nicefrac{A_1}{A_2}$, Figure~\ref{fig:comparison} compares the upper bound on the spherical-shells error-exponent $\ErUG$~\eqref{eq:spherical_shell:UbOnEr3} with the distributed-structure error exponent $\DistributedNestingErrexp$~\eqref{eq:Gaussian:MAC:structured_code:error_exponent}. We can see that below a certain rate $R_2$, distributed structure has a strictly larger error exponent than the spherical-shells one.
These exponents are also compared with $\EachievableSU(R_1+R_2,A_1+A_2)$, which is used here as a benchmark. Above the critical rate it is an upper bound.
Figure~\ref{fig:comparison:30:27} shows a high SNR case, where the second user has half of the power of the first user. The error exponent of the distributed structure code is strictly larger than the one of the spherical-shells codebooks in part of the rate region.
Figure~\ref{fig:comparison:50:25} shows a high SNR case, where the second user is much weaker than the first one; therefore it is almost a single user case.
Figure~\ref{fig:comparison:6:3} shows a low SNR case, where the second user looses since it is not equal to the single-user capacity $C(\tilde{A}_2)$.
Figure~\ref{fig:comparison:10:1} shows a case, where the second user is much weaker than the first one, and therefore the first user looses rate since it is not equal to the single-user capacity $C(A_1)$.
In the last two cases, the error exponent of the distributed structure code is lower than the one of the spherical-shells code.

\begin{figure*}[htb]
\centering
\subfigure[Balanced, high SNRs: $A_1 = 30$~dB, $A_2 = 27$~dB.]{
\psfrag{&alpha}[cc]{{\footnotesize $R_2/\frac{1}{2} \log A_2$}}
\includegraphics[scale=0.4]{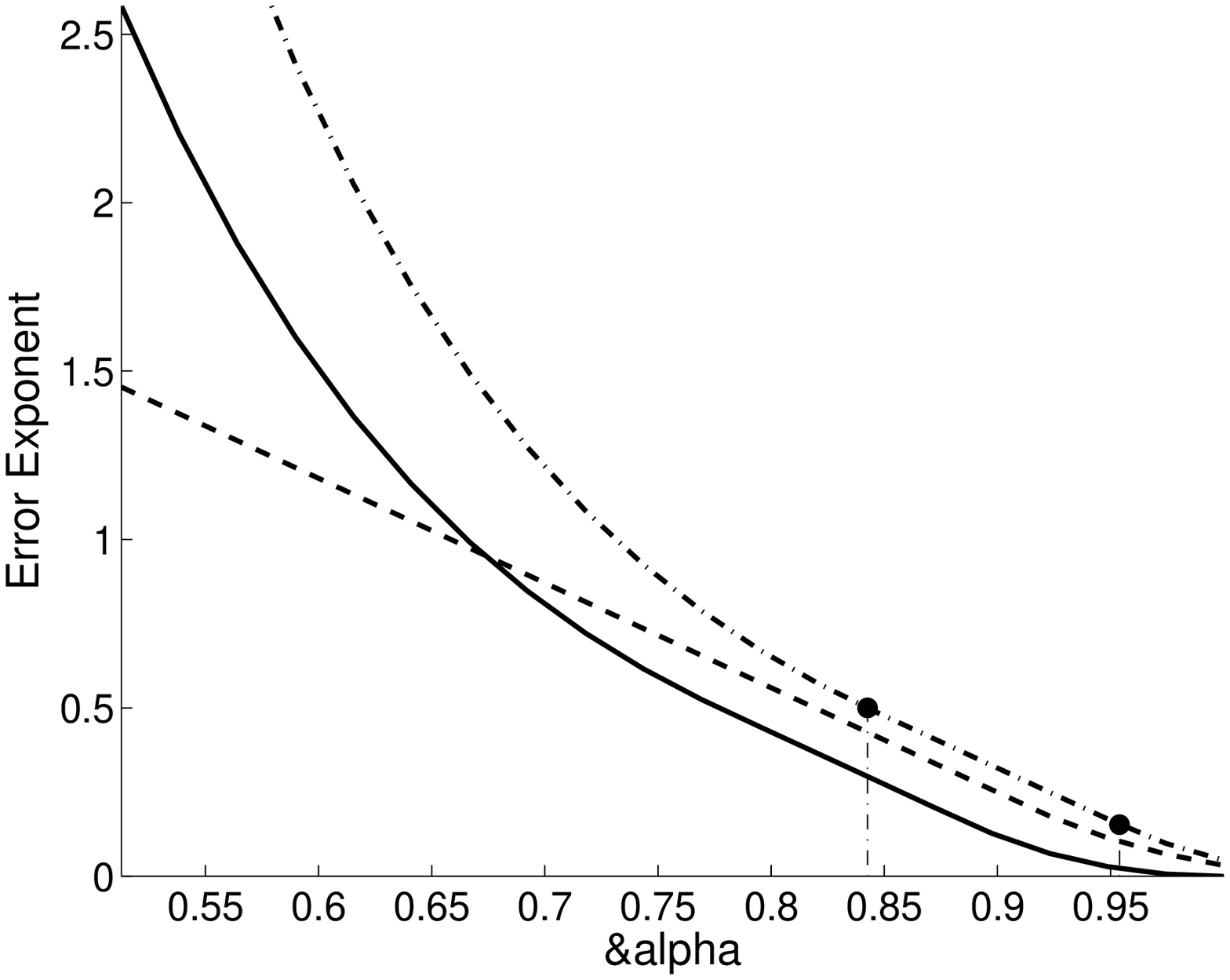}
\label{fig:comparison:30:27}
}
\subfigure[Unbalanced, high SNRs: $A_1 = 50$~dB, $A_2 = 25$~dB.]{
\psfrag{&alpha}[cc]{{\footnotesize $R_2/\frac{1}{2} \log A_2$}}
\includegraphics[scale=0.4]{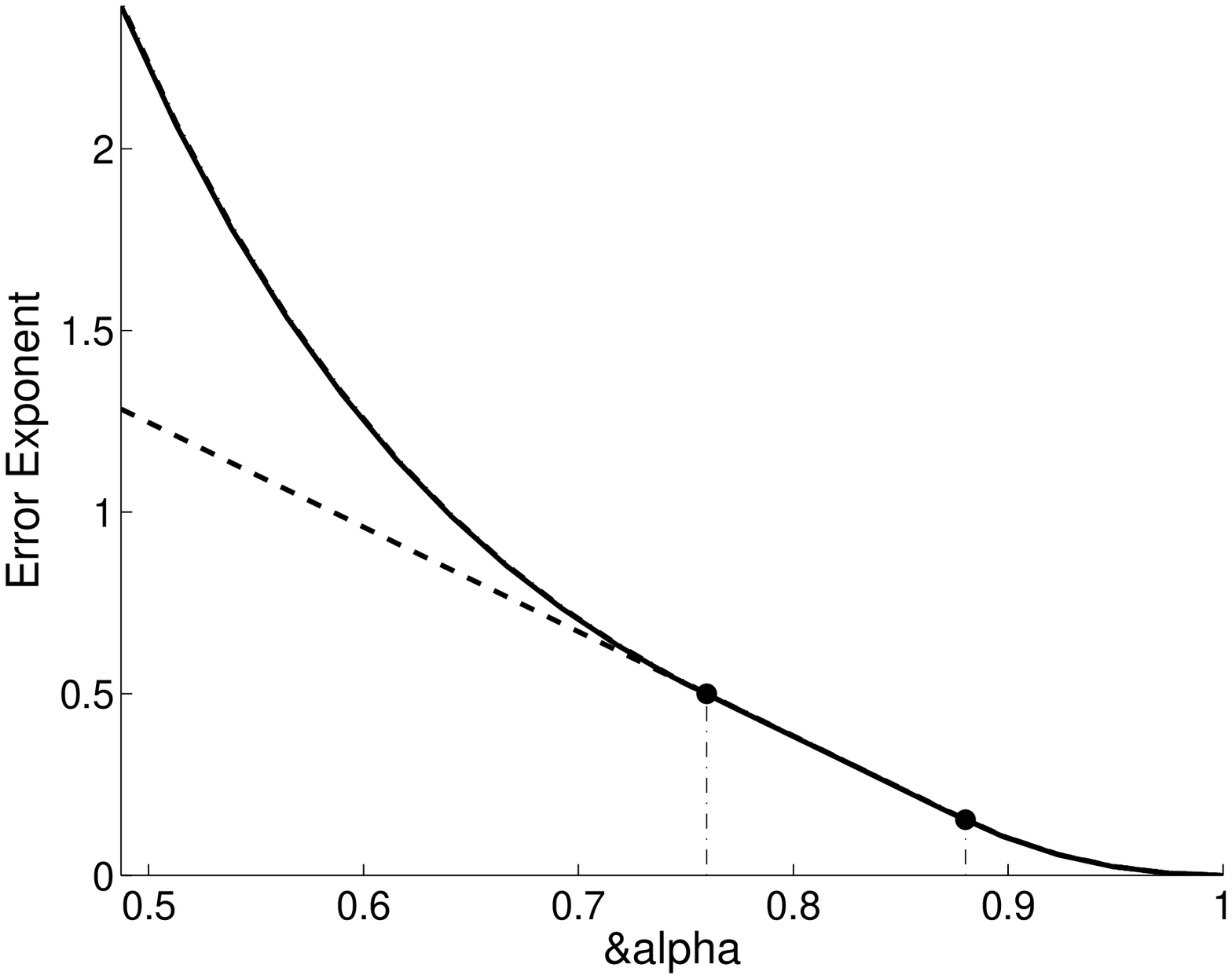}
\label{fig:comparison:50:25}
}
\subfigure[Balanced, low SNRs: $A_1 = 6$~dB, $A_2 = 3$~dB.]{
\psfrag{&alpha}[cc]{{\footnotesize $R_2/\frac{1}{2} \log A_2$}}
\includegraphics[scale=0.4]{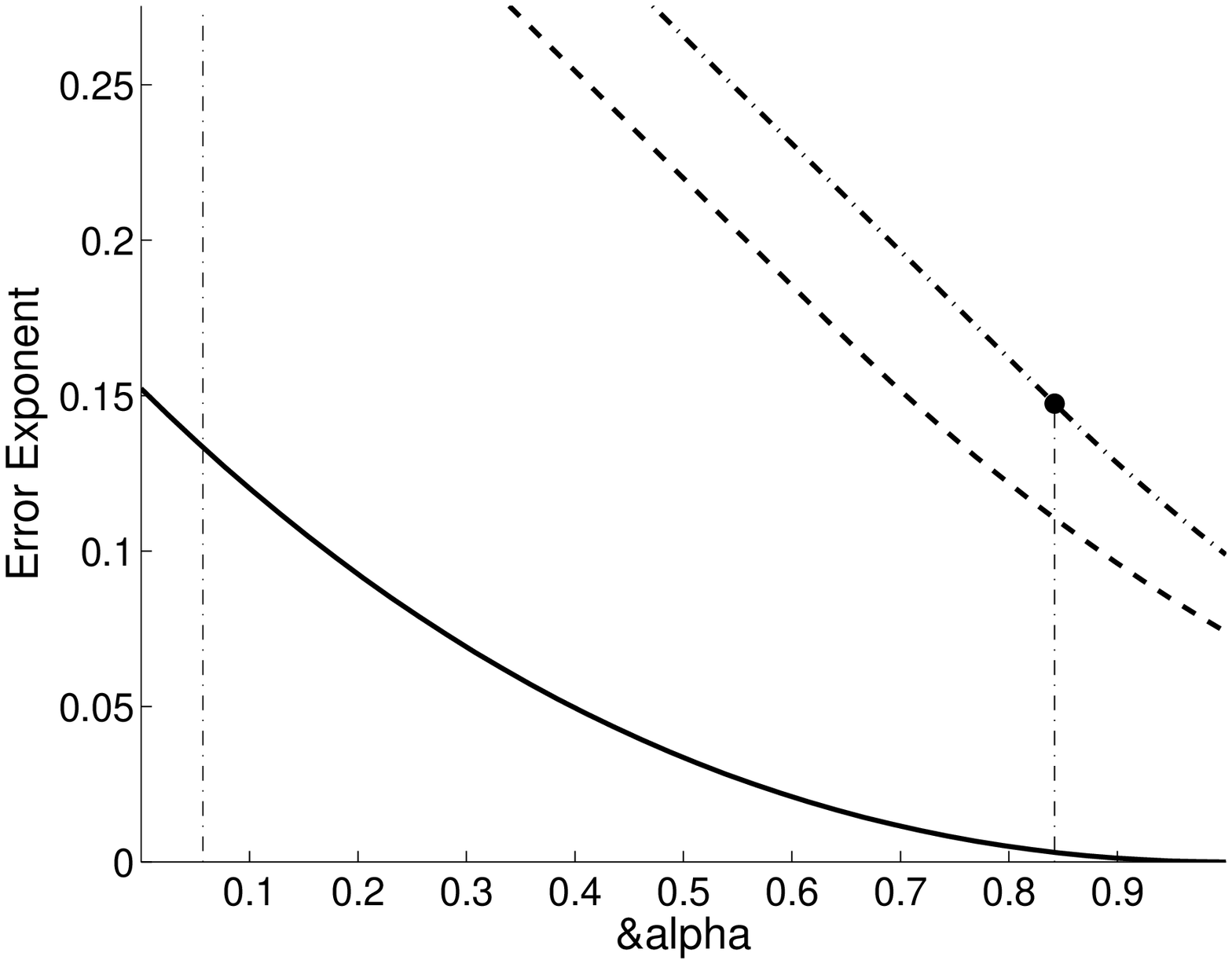}
\label{fig:comparison:6:3}
}
\subfigure[Unbalanced, SNRs: $A_1 = 10$~dB, $A_2 = 1$~dB.]{
\psfrag{&alpha}[cc]{{\footnotesize $R_2/\frac{1}{2} \log A_2$}}
\includegraphics[scale=0.4]{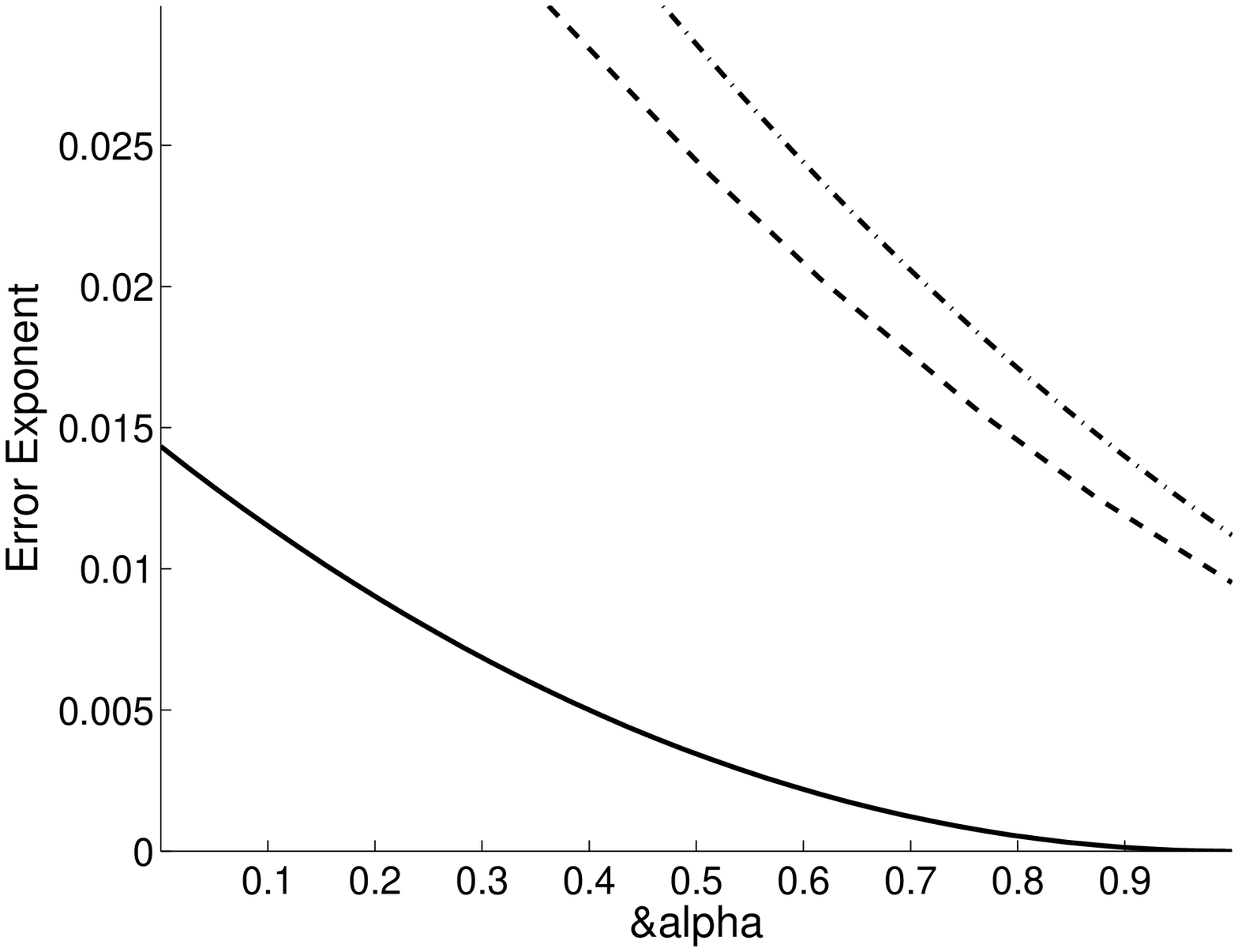}
\label{fig:comparison:10:1}
}
\caption{Comparing the error exponent of the spherical-shells code which is upper bounded by $\ErUG(R_1+R_2,A_1,A_2)$ (dashed line) to the error-exponent of the distributed structure $\DistributedNestingErrexp(R_1,R_2,A_1,A_2)$ (solid line). $\EachievableSU(R_1+R_2,A_1+A_2)$ of the associated single user channel is shown by the dash-dot line of the associated single user channel. The two dots indicate the expurgation rate $R_{ex}(A_1+A_2)$ and the critical rate $R_{cr}(A_1+A_2)$. The horizontal axis is the fraction of rate of the weak user from the maximum achievable by the distributed nesting technique, while $R_1=\nicefrac{1}{2} \log \nicefrac{A_1}{A_2}$ is fixed, and therefore $\mu_1=\mu_2$.}
\label{fig:comparison}
\end{figure*}

While for general (i.e., non-structured) infinite constellations, Poltyrev's exponent~\eqref{eq:poltyrev_exponent} in the range of squared Voronoi-to-noise ratio larger than $4$ is achieved by expurgation, for (infinite) lattices it is inherently achieved with high probability over the ensemble of~\cite{KrithivasanPradhan:2007:online}, since all codewords have the same error probability~\cite{Poltyrev:94:AWGN}.\footnote{For the ensemble of~\cite{KrithivasanPradhan:2007:online} the fraction of ``bad lattices'' goes to zero as $n \mGoesTo \infty$.}
We note that the distributed structure code is superior to the spherical-shells in (part of) the expurgation region of Poltyrev's exponent ($\mu \geq 4$).
This may imply that the ``inherent expurgation'' of lattices contributes to some of the gain of this code over the spherical-shells one.

%
\section{Discussion and Conclusion}
%
\label{sec:discussion}

By using linear codes, we have shown that for modulo-additive MAC channels with a prime alphabet size, the achievable error exponent is equal to the best known exponent of the associated single-user channel.
In addition, we have demonstrated that linear codes offer improvement to the best known MAC error exponent for ``almost additive'' channels. 

While we chose to present the results for two-user MAC channels, the approach immediately extends to any number of users, allowing for full expurgation of the combined linear code. It is therefore reasonable to expect that at low rates and at least for modulo-additive channels, the gain over the best previously known achievable error exponent will increase with the number of users. 
The approach of transforming a MAC channel into an additive one is also applicable to a wide variety of non-additive network setups, where structure is beneficial in terms of capacity.

In the Gaussian MAC case, the motivating observation of this approach is the fact that the sum of two uniformly distributed codebooks over spherical shells is not a uniform distribution over a spherical shell.
We use the fact that the sum of two nested lattices is a lattice, and this function is reversible. 
This technique, however, loses even in the sum rate. In order to understand the loss, we can view the codebook at the channel output as a tiling of the codebook of the weaker user over the codebook of the stronger one. We do not expect this tiling to be perfect, i.e., to constitute a good shaping region for the associated single-user channel, since the resulting shaping region exceeds the optimal shaping region of the associated single-user channel.
Even without perfect tiling, we expect that the results can be significantly improved.
Last, we note that a lattice version~\cite{ErezZamir:2008} of the transformation which was applied in the discrete case (Section~\ref{sec:discrete:transformation}) can be applied to almost additive Gaussian noise channels in order to improve the error exponent of the channel.

\bibliographystyle{IEEEtran.bst}
\bibliography{elih.bib}

\end{document}